\def\th@plain{%
\thm@notefont{}
  \itshape 
  }
\def\th@definition{%
\thm@notefont{}
  \normalfont 
  }
\newtheorem{theorem}{Theorem}
\newtheorem{corollary}{Corollary}
\newtheorem{lemma}{Lemma}
\begin{document} 

\title{ \ \\ Energy Efficiency Maximization in mmWave Wireless Networks with 3D beamforming}
\author{ {\small{ \ \\Mahdi Baianifar \\  }
 }}
\maketitle

\begin{abstract}
In this paper, we address the problem of three dimensional beamforming (3DBF) in millimeter wave (mmWave) wireless networks.~In particular, we study the impact of base station (BS) antenna tilt angle optimization on the energy efficiency (EE) of mmWave networks under two different scenarios: a homogeneous network consisting of multiple macro base stations (MBSs), and a heterogeneous network where several femto base stations are added within the coverage areas of the MBSs.~First, by adopting a stochastic geometry approach, we analyze the coverage probability of both scenarios that incorporate the 3DBF.~Then, we derive the EE of the networks as a function of the MBS antenna tilt angle.~Next, optimization problems are formulated to maximize the EE of the networks by optimizing the tilt angle.~Since the computational complexity of the optimal solution is very high, near-optimal low-complexity methods are proposed for solving the optimization problems.~Simulation results show that in the mmWave networks, the 3DBF technique with optimized tilt angle can considerably improve the EE of the network.~Also, the proposed low complexity approach presents a performance close to the optimal solution but with a significant reduced complexity. 
\end{abstract}

\begin{IEEEkeywords}
mmWave network, 3D beamforming, coverage probability, energy efficiency, tilt angle optimization, blockage effect, stochastic geometry, HetNet.
\end{IEEEkeywords}

\section{\uppercase{Introduction}}
\label{sec:introd}
\lettrine[lines=2]{I}{ncreasing} demands for high data rate in the 5th generation (5G) cellular systems need much more bandwidth compared to current cellular networks. The millimeter wave (mmWave) frequency bands have recently attracted a lot of attentions due to large bandwidth that they offer \cite{itwillwork13, covratemmwave15}. However, in practice they encounter some challenges including high path loss, high power consumptions and the blockage effect caused by buildings and human bodies \cite{ blockanalysis14,mmWaveEn14,JCNmmWave16}. Another emerging technique in 5G wireless networks is three dimensional beamforming (3DBF) which utilizes active large antenna arrays to control the antenna patterns in a 3D space \cite{DrRazaviTr}. In fact, in the 3DBF more degrees of freedom are exploited to adjust the beam patterns in both horizontal and vertical (tilt angle) domains to improve the network performance in term of spectral efficiency and energy efficiency (EE) \cite{DrRazaviTr, 3DBFdesignLee13}. On the other hand, due to the short wavelength of the mmWave bands, a large number of antenna elements can be packed in a small area arrays which makes them suitable for employing the 3DBF.

One of the recent powerful mathematical techniques that has been proposed for analyzing the performance of cellular networks is stochastic geometry (SG) \cite{covratemmwave15,SGmodelingElsawy17,JCNLTEA18}. This technique is widely used in evaluating different network performance metrics including coverage, capacity, spectral efficiency and the EE in the microwave as well as mmWave systems.
An SG-based mathematical framework to model random blockage in the mmWave networks has been proposed in \cite{blockanalysis14} in which the authors proved that the distribution of the number of the blockages in a link follows a Poisson distribution. In \cite{covratemmwave15}, the SG approach was employed for analyzing the coverage and rate of the mmWave networks and it was shown that the mmWave networks achieve a comparable coverage but higher data rates than microwave networks. Also, the SG technique was employed in \cite{SGmodmultitier15} to evaluate the performance of multi-tier networks and it was shown that a sufficiently dense mmWave cellular network can outperform microwave cellular networks in terms of the coverage probability. In addition, the downlink of a multi-tier heterogeneous mmWave cellular network in a Nakagami fading channel was investigated by a SG approach in \cite{CovHetnetDown17}. In \cite{Onireti2018}, the effect of user association and power control on the coverage and EE of the mmWave system is investigated. The maximization of the EE by considering a constraint on the coverage probability is studied in \cite{Cen2017} which provides insights for deployment of an energy efficient mmWave network. 

In this paper, we address the problem of the 3DBF in the mmWave networks. In particular, our work focuses on the EE maximization in the mmWave networks by tilt angle optimization at the BSs that are equipped with active antenna systems. To the best of our knowledge, this problem has not been studied before in literature. Furthermore, for our analysis, we use a stochastic geometry approach. In this approach, the location of BSs are modelled by a homogeneous Poisson point process (PPP). In addition, we use a modified model for the propagation channel that properly incorporates the existence of blockage effect in the environment. Using the above assumptions and modeling, we first evaluate the signal-to-noise-plus-interference ratio (SINR) coverage probability and then derive the EE of the network as a function of the BSs' tilt angle.

We solve the above problem for two different scenarios. In the first scenario, a homogeneous network is studied where multiple macro base stations (MBSs) serve a number of macro-users. Applying the SG technique, we compute the coverage probability and the EE of the network. Afterwards, the optimum tilt angle that maximizes the EE is found through an optimization problem. Because of the complex form of the objective function, this optimization problem is hard and can not be solved efficiently. The optimal value is then obtained by exhaustive search over the available range of tilt angles. In order to reduce the complexity, we propose an efficient algorithm based on bisection method which has a close performance to the exhaustive search but with a considerably reduced complexity.

The second scenario that we examine in this paper includes a two tier heterogeneous network (HetNet) composed of multiple MBSs and femto base stations (FBSs) which are modeled by two PPPs with different densities. To limit interference, we define a sleep region around each MBS so that the FBSs in the sleep regions do not transmit any signal. Using this idea, the coverage of the network is evaluated and the EE is calculated. Then, the MBS tilt angle and the radius of the sleep regions are jointly optimized through an optimization problem for maximizing the EE. We also propose an efficient method which considerably reduces the computational complexity. It is shown that the proposed efficient method has only a small degradation in the performance with respect to the optimal solution obtained by exhaustive search. In addition, in the second scenario, we provide a lower bound on the coverage probability of the femto users that is very tight.

Finally, through numerical simulations, we evaluate the performance of the proposed schemes and confirm that in a mmWave network, using the 3DBF technique with optimized tilt angle considerably improves the performance of the network in terms of the EE. Our simulations also demonstrate the effectiveness of the proposed low-complexity optimization methods.

The rest of the paper is organized as follows: In Sec. \ref{sec2}, the system models of the homogeneous network and HetNet are described.~Sec. \ref{seccov} derives the coverage probabilities and EE of two scenarios.~In Sec. \ref{EEsec}, the EE maximization problem is formulated and the low-complexity solving method is presented.~Numerical results are presented in Sec. \ref{simsec}, and finally Sec. \ref{consec} concludes the paper.

\section{SYSTEM MODEL}\label{sec2}
We consider downlink of a multi-cell mmWave cellular network under two scenarios: a homogeneous network composed of multiple MBSs, and a two tier HetNet consisting of multiple MBSs and multiple FBSs that both the MBSs and FBSs utilize same frequencies in the mmWave bands. The path loss of the channels between the MBSs and macro users are given by \cite{blockanalysis14}
\begin{equation} \label{PathLossEq}
L \left(r \right) = \left\{
\begin{array}{lcl}
C_L r^{-\alpha_L} & \text{with prob.} & P_L\left(r\right) \\
C_N r^{-\alpha_N} & \text{with prob.} & P_N\left(r \right) 
\end{array}, \right.
\end{equation}
where $C_{\small{L}}$ and $C_{\small{N}}$ account for the path loss in a reference distance for line of sight (LOS) and non-LOS (NLOS) links, respectively, $r$ is the distance between a BS and its associated user, and $\alpha_{\small{L}}$ and $\alpha_{\small{N}}$ denote the path loss exponents for LOS and NLOS links, respectively. Links are in the LOS condition with probability $P_L\left(r\right) = e^{-\beta r}$ where $\beta$ indicates the intensity of the blockage effect; and in the NLOS condition with probability $P_N\left(r\right)=1-P_L\left(r\right)$ \cite{blockanalysis14}. Also, it is assumed that the link between the MBSs and the femto users and that between FBSs and macro users are always in the NLOS condition. In addition, since femto users are usually located indoor, the channel between a femto user and interfering (non-serving) FBSs are assumed to be NLOS. 

To design the 3DBF techniques, we need a model for the vertical and horizontal antenna patterns at the MBSs. In this paper, for the vertical plane, we adopt a model presented in \cite{3GPPstd} in which each MBS's antenna gain is expressed as 
\begin{equation}
G \left( \theta , \theta_{\text{tilt}} \right) = - \min{\left( 12\left( \frac{\theta - \theta_{\text{tilt}}}{\theta_{\text{3dB}}}\right)^2, \text{{SLL}}_\text{dB}\right)} \ {\text{dB}}, \label{AntennaPattMBSs}
\end{equation}
where $\theta \ge 0$ is the angle between the horizon and the line connecting the MBS to the user (see Fig.~\ref{FigVertPat}). In addition, $\theta_{\text{tilt}}\ge 0$, $\theta_{\text{3dB}}$, and $\text{{SLL}}_{\text{dB}}$ stand for the array tilt angle, the $\text{3dB}$ beamwidth, and the side-lobe level of the MBS antenna pattern in the vertical plane, respectively \cite{tiltangtwo15}. By defining $H_{\text{eff}} \triangleq H_{\text{BS}} - H_{u}$; where $H_{\text{BS}}$ and $H_u$ represent the MBSs' and users' antenna heights, respectively, \eqref{AntennaPattMBSs} can be rewritten as
\begin{equation*}
G \left(R,\theta_{\text{tilt}} \right) = -\min{\left(12\left(\frac{{\text{atan}}\left({H_{\text{eff}}}/{R}\right)-\theta_{\text{tilt}}}{\theta_{\text{3dB}}}\right)^2, \text{{SLL}}_{\text{dB}}\right)} \text{dB},
\end{equation*}
\noindent where $R$ equals the horizontal distance between the MBS and the user. It is assumed that all FBSs’ and the users’ antennas have an omni-directional pattern in the vertical domain. 
\begin{figure}[t!]
\centering 
\includegraphics[scale=0.45]{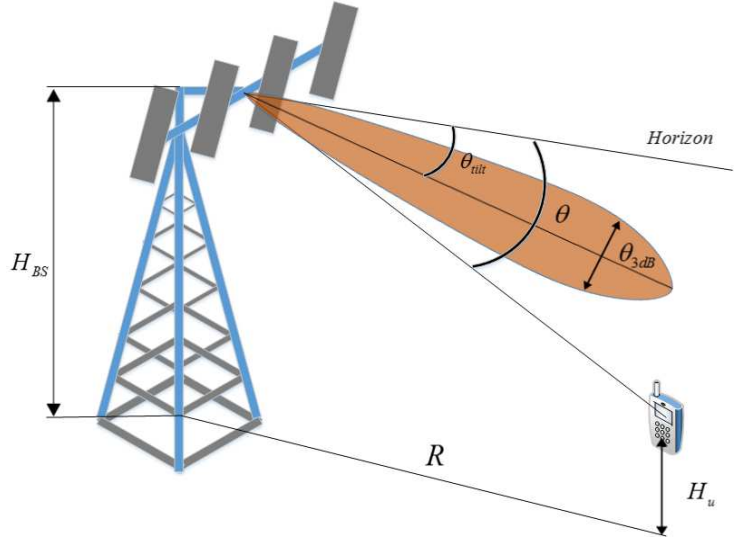}
\caption{~Vertical antenna pattern at each BS.} \label{FigVertPat}
\end{figure}

For modeling the MBS's and macro users' antennas horizontal pattern, a sectorized pattern is utilized that has constant gains of $M$ and $m$ in its main-lobe and side-lobe, respectively \cite{covratemmwave15}. The total antenna gain of a transmitter to receiver link in the horizontal plane is modelled by a random variable $D$ which takes four values of $d_1 = M_t M_r$, $d_2 = M_t m_r$, $d_3 = m_t M_r$, and $d_4 = m_t m_r$ with probabilities $p_1 = c_t c_r$, $p_2 = c_t \left( 1 - c_r \right)$, $p_3 = \left( 1 - c_t \right) c_r$, and $p_4 = \left( 1-c_t \right) \left(1-c_r \right)$, respectively. The subscripts $t$ and $r$ stand for the transmitter (MBS) and receiver (macro user), respectively. In addition, we have $c_t = {\varphi_t}/{2\pi}$ and $c_r = {\varphi_r}/{2\pi}$, in which $\varphi_t$ and $\varphi_r$ indicate the horizontal beamwidth of the transmitter and the receiver antennas, respectively. Also, we assume that antennas horizontal pattern for FBS and femto users are $M^f$ and $m^f$ in its main-lobe and side-lobe, respectively with $\phi_t^f$ and $\phi_r^f$ as a horizontal beamwidth of the FBSs and femto users, respectively. We denote total antenna gain of the FBS and femto user by $D^f$ which takes value $d_i^f$ with probabilities $p_i^f$ for $i=1,...,4$, where they can be calculated in similar manner as the MBS and macro user. 
In the following, we explain these two scenarios for the mmWave network.

\subsection{Homogeneous Cellular Network}
In this scenario, only MBSs exist in the network whose positions are modeled by a homogeneous PPP $\Phi_m$ with density $\lambda_m$. From the Slivnyak theorem \cite{SGwireles09}, to evaluate the performance of the network, it is sufficient to consider a typical user located at the origin and analyze its performance. The received signal at the typical user can be written as

\begin{align} \label{REcForMBSonly}
y \ & = \ \sqrt{P_m L_m\left(r_{0,0}\right) D_0 G_0} \ h_{0,0} \ s_{0} \qquad \qquad \qquad \qquad \nonumber \\
&+ \sum_{j \neq 0,\: X_j \in \Phi_m}{\sqrt{P_m L_m\left(r_{j,0}\right) D_j G_j} \ h_{j,0} \ s_{j}}+n,
\end{align} 
where $P_m$ represents the transmission power of each MBS and $r_{j,0}$, $L_m\left(r_{j,0} \right)$ and $h_{j,0}$ indicate the distance, the path loss and the small scale fading between the $j$th MBS ($j=0$ is for the MBS that serves the typical user) and the typical user, respectively, $D_j$ is the total horizontal antenna gain between the $j$th MBS and the typical user, $G_j = G \left(r_{j,0},\theta_{\text{tilt}} \right)$ shows the vertical antenna gains, $s_{j}$ equals the transmitted signal of the $j$th MBS, $n\sim {\mathcal{CN}} \left(0 ,\sigma^2 \right)$ stands for the noise, and the location of the $j$th MBS is denoted by $X_j$. We consider Nakagami-m fading in which fading power $\left\lvert h_{j,0} \right\rvert^2$ follows a Gamma distribution $\Gamma\left(m,1/m \right)$ with the following complementary cumulative distribution function (CCDF)
\begin{equation*}
\bar{F}\left(z \right) = e^{-mz} \sum_{k=0}^{m-1}{\frac{\left(mz\right)^k}{k! }}.
\end{equation*}
In all equations, index $0$ is used for identifying the typical user and also the MBS that serves this user.

\subsection{Two Tier Heterogeneous Network}
In this scenario, in addition to the MBSs, a number of FBSs exist in the network. The locations of the MBSs and FBSs are modeled by two independent homogeneous PPP $\Phi_m$ and $\Phi_f$ with densities $\lambda_m$ and $\lambda_f$, respectively. Location of the $j$th FBS is denoted by $Y_j$. In this case, we analyze the performance of a typical macro user as well as a typical femto user. It is assumed that femto users are uniformly distributed within the coverage area of its serving FBS, which has a circular area of radius $R_f$. Defining the signal attenuation caused by walls as $\ell_W$, the attenuation of the links between the MBSs and the typical indoor femto user and that between non-serving FBSs and the typical indoor femto user are represented as $\ell_W$ and ${\left(\ell_W \right)}^2 $, respectively. To decrease interference on the macro users, we consider a sleep region with radius $R_c$ around each MBS, where the FBSs lying in this region are forced to enter a sleep mode and do not transmit any signals. In other words, if the distance of a FBS from each MBS is less than $R_c$, it will be turned off. By considering the above assumptions, the received signal at the typical macro user and femto user respectively become
\begin{align} \label{TwoTierMBS}
y_m = & \sqrt{P_m L_m\left( r^m_{0,0} \right) D_0 G_0} h_{0,0} s_0 \nonumber \\
& +\sum_{j \neq 0,\: X_j \in \Phi_m}{\sqrt{P_m L_m\left( r^m_{j,0} \right) D_j G_j} h_{j,0} s_j} \nonumber \\
&+\sum_{j,\: Y_j \in \Phi'_f }{\sqrt{P_f \ell_W L_m^f\left( r^{fm}_{j,0} \right) D^{fm}_j} g_{j,0}^f x_j}+n,
\end{align}
\begin{align} \label{TwoTierFemto}
y_f = & \sqrt{P_f L_f\left( r^f_{0,0} \right) D_0^f} g_{0,0} x_0 \nonumber \\
& +\sum_{j,\: X_j \in \Phi_m }{\sqrt{P_m \ell_W L_f^m\left( r^{mf}_{j,0} \right) D_j^{mf} G_j} h^m_{j,0} s_j} \nonumber \\
&+\sum_{j\neq 0,\: Y_j \in \Phi_f^{'} }{\sqrt{P_f \left(\ell_W \right)^2 L_f\left( r^f_{j,0} \right) D_j^f} g_{j,0} x_j}+n',
\end{align}
where $r^m_{j,0}$ and $r^{fm}_{j,0}$ denote the distance between the $j$th MBS and the typical macro user and that between the $j$th FBS and the typical macro user, respectively, $r^f_{j,0}$ and $r^{mf}_{j,0}$ show the distance between the $j$th FBS and the typical femto user and that between the $j$th MBS and the typical femto user, respectively, $\Phi'_f$ indicates the modified version of $\Phi_f$ after excluding the FBSs in radius of $R_c$ of each MBS, $L_m^f \left( r_{j,0} \right)$, $D_j^{fm}$ and $g^f_{j,0}$ represent the path loss, the total antenna gain in the horizontal domain and the small scale fading between the typical macro user and the $j$th FBS, respectively, $x_j$ stands for the transmitted signal of the $j$th FBS, $g_{j,0}$ and $L_f\left(r_{j,0}^f\right)$ equal the small scale fading and the path loss between the typical femto user and the $j$th FBS, respectively, $D_j^f$ shows the total antenna gain between the $j$th FBS and the typical femto user, $L_f^m \left( r_{j,0}^{mf} \right)$, $D_j^{mf}$ and $h^m_{j,0}$ represent the path loss, the total antenna gain in the horizontal domain and the small scale fading between the typical femto user and the $j$th MBS, respectively, $P_f$ equals the transmitted power of the $j$th FBS, and $n$ and $n'$ are the complex Gaussian noise as ${\mathcal{CN}}\left(0,\sigma^2 \right)$. We define the values of variable $D^{fm}$ by $d_i^{fm}, i=1,..., 4$ with probabilities $p_i^{fm}$ (calculations are similar to MBS and macro users total antenna gain). $D^{mf}$ takes values $d_i^{mf}$ with probabilities $p_i^{mf}$. Also, since these links are in NLOS condition, $g_{j,0}, g^f_{j,0}, h^m_{j,0}$ are distributed as ${\mathcal{CN}} \left(0,1 \right)$. 

\section{ENERGY EFFICIENCY CALCULATION}\label{seccov}
In this section, first the coverage probability of the network is calculated and then used for deriving the EE under two above scenarios. There exist different user association rules like the nearest BS, minimum path or the strongest average power and also maximum SINR \cite{rtackappandrew11, covratemmwave15, Ktierhetnet12} and in this paper, we use the maximum average received power user association rule in which each user is associated with the BS that provides it the strongest average received power. However, it should be noted that in the mmWave networks, because of the blockage effect and different path loss exponents for LOS and NLOS conditions, the strongest BS is not necessarily the nearest BS. To address this issue, in our analyses, we first map each NLOS BS located at distance $r$ from the origin to an equivalent LOS BS with a larger distance $R_{eq}\left(r\right) = \left(\nicefrac{C_L}{C_N} \right)^{{1}/{\alpha_L}} r^{\sfrac{\alpha_N}{\alpha_L}}$, and then use the distance criterion to associate the users to the BSs. In addition, by considering different probabilities for LOS and NLOS links as in \eqref{PathLossEq}, the homogeneous PPP, $\Phi_m$ can be divided into two independent non-homogeneous PPPs, $\Phi_L$ and $\Phi_N$ with densities $\lambda_L \left( r \right) = \lambda_m P_L \left(r \right) $ and $\lambda_N \left( r \right) = \lambda_m P_N \left(r \right) $, respectively. The distance between the typical user and its serving BS is a random variable whose probability density function (PDF) is obtained by the following lemma.
\begin{lemma} 
Assuming the highest power user association rule, the PDF of the distance between the typical user and its serving BS is given by \eqref{PDFNBBL} at the top of the next page, where $R^{-1}_{eq}\left(r\right) = \mu r^{\kappa}$, $\mu = \left( \nicefrac{C_N}{C_L}\right)^{\sfrac{1}{\alpha_N}}$ and $\kappa = \nicefrac{\alpha_L}{\alpha_N}$.

\begin{figure*}[!t] 
\begin{align} \label{PDFNBBL}
f_R \left(r \right) = & 2\pi \lambda_m \left( r e^{-\beta r} + \mu \kappa r^{2 \kappa -1} \left(1-e^{-\beta R_{eq}^{-1}\left(r \right)} \right) \right) \times \exp \left( -\frac{2\pi \lambda_m}{\beta^2} \left(1-\left(1+\beta r \right) e^{-\beta r} \right)\right) \times\nonumber \\
& 
\exp \left( -\frac{2\pi \lambda_m}{\beta^2} \left( \frac{\beta^2 \left( R^{-1}_{eq}\left(r\right) \right)^2 }{2}+\left(\beta R^{-1}_{eq}\left(r\right) +1 \right)e^{-\beta R^{-1}_{eq}\left(r\right)}-1 \right) \right)
\end{align}
\hrulefill
\end{figure*}

\end{lemma}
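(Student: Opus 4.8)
The plan is to reduce the maximum-average-received-power rule to a nearest-point rule in an equivalent-distance domain, express the serving-distance CCDF as a void probability of a superposition of two independent Poisson processes, and then differentiate. I would first invoke the independent LOS/NLOS thinning already recorded in the text, which replaces $\Phi_m$ by two independent inhomogeneous PPPs $\Phi_L$ and $\Phi_N$ with intensities $\lambda_L(r)=\lambda_m e^{-\beta r}$ and $\lambda_N(r)=\lambda_m(1-e^{-\beta r})$. The mapping $R_{eq}$ is constructed precisely so that the mean power of a NLOS BS at distance $r$ equals that of a LOS BS at distance $R_{eq}(r)$; hence, after relocating each NLOS point from $r$ to $R_{eq}(r)$ and leaving each LOS point where it is, ``strongest average power'' becomes ``smallest equivalent distance,'' and $R$ is the distance from the origin to the nearest point of the superposed, relocated process.

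Next I would compute, for each process, the intensity measure of the equivalent ball of radius $r$, i.e.\ the expected number of relocated points within equivalent distance $r$. For the LOS process the equivalent and physical distances coincide, so $\Lambda_L([0,r])=\int_0^r 2\pi\lambda_m s\,e^{-\beta s}\,ds$. For the NLOS process a point lands within equivalent distance $r$ exactly when its physical distance is at most $R_{eq}^{-1}(r)=\mu r^{\kappa}$, giving $\Lambda_N([0,r])=\int_0^{R_{eq}^{-1}(r)} 2\pi\lambda_m s\,(1-e^{-\beta s})\,ds$. Both integrals are elementary (integration by parts on $\int s e^{-\beta s}\,ds$), and I would check that they reproduce verbatim the two exponents displayed in \eqref{PDFNBBL}.

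Since $\Phi_L$ and $\Phi_N$ are independent and each is Poisson, the event $\{R>r\}$ is the intersection of two independent void events, so the void-probability formula of a PPP factorizes the CCDF as $\bar F_R(r)=P(R>r)=\exp(-\Lambda_L([0,r]))\,\exp(-\Lambda_N([0,r]))$. Differentiating $f_R(r)=-\tfrac{d}{dr}\bar F_R(r)$ brings down the sum of the two intensity densities times the product of exponentials; the LOS density contributes $2\pi\lambda_m r e^{-\beta r}$, while the chain rule applied to the inner limit $R_{eq}^{-1}(r)$ of $\Lambda_N$ supplies the factor $r^{2\kappa-1}$ and the argument $1-e^{-\beta R_{eq}^{-1}(r)}$, assembling the prefactor $2\pi\lambda_m\bigl(r e^{-\beta r}+\cdots\,r^{2\kappa-1}(1-e^{-\beta R_{eq}^{-1}(r)})\bigr)$.

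The one genuinely delicate step is the NLOS change of variables: one must correctly transport the planar intensity $2\pi\lambda_m s(1-e^{-\beta s})$ onto the equivalent-distance axis, so that the Jacobian $\tfrac{d}{dr}R_{eq}^{-1}(r)$ and the factor $R_{eq}^{-1}(r)$ itself combine to yield the power $r^{2\kappa-1}$ and the exponent argument $R_{eq}^{-1}(r)$ appearing in the NLOS term. Everything downstream—the two integrations and the final differentiation—is routine once this bookkeeping is pinned down.
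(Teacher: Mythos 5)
Your proposal is correct and takes essentially the same route as the paper's proof: both reduce the strongest-average-power rule to a nearest-point rule via the $R_{eq}$ mapping, write $\Pr\{R>r\}$ as the product of the independent void probabilities of $\Phi_L$ over $B(0,r)$ and of $\Phi_N$ over $B\left(0,R_{eq}^{-1}(r)\right)$, evaluate the two intensity integrals (which are exactly the two exponents in \eqref{PDFNBBL}), and differentiate the CCDF. One remark on the ``delicate bookkeeping'' you flag: carrying out the chain rule explicitly gives the NLOS prefactor term $\mu^{2}\kappa\, r^{2\kappa-1}\left(1-e^{-\beta R_{eq}^{-1}(r)}\right)$, i.e.\ with $\mu^{2}$ rather than the single $\mu$ printed in \eqref{PDFNBBL}, so your derivation in fact exposes a small typo in the paper's stated formula rather than any gap in your argument.
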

\begin{proof}
The CCDF of the distance of the nearest BS to the typical user $R$ can be calculated as
\begin{align}
&\text{Pr} \left\lbrace R>r \right\rbrace = \text{Pr} \left\lbrace \Phi_m \left( B \left(0,r \right) \right) = 0 \right\rbrace \nonumber \\
& = \text{Pr}\left\lbrace \left( \Phi_L \left( B \left(0,r \right) \right) = 0 \right) \cap \left( \Phi_{N} \left(B\left(0,R^{-1}_{eq} \left(r \right) \right) \right) = 0 \right) \right\rbrace \nonumber \\
& = \text{Pr}\left\lbrace \Phi_L \left( B \left(0,r \right) \right) = 0 \right\rbrace \text{Pr}\left\lbrace\Phi_{N} \left(B\left(0,R^{-1}_{eq} \left(r \right) \right) \right) = 0\right\rbrace, \label{CCDFEeq}
\end{align}
where $B\left(0,r \right)$ shows a ball centering at origin with radius $r$, $\Phi_m\left( B\left( 0,r \right)\right)$ represents the number of PPP $\Phi_m$ in the ball $ B\left( 0,r \right)$, and the last equality comes from the fact that $\Phi_L$ and $\Phi_{N}$ are independent. Hence
$\text{Pr}\left\lbrace \Phi_L \left( B \left(0,R \right) \right) = 0 \right\rbrace$ can be calculated as
\begin{align}
&\text{Pr}\left\lbrace \Phi_L \left( B \left(0,r \right) \right) = 0 \right\rbrace \stackrel{\left(a \right)}{=} \exp\left(- \int_{B\left(0,r\right)}{\lambda_L\left( \left\| x\right\| \right) dx} \right) \nonumber \\
&\stackrel{\left(b \right)}{=} \exp \left(-2\pi \lambda_m \int_0^r{\rho P_L \left(\rho \right) d\rho} \right) \nonumber \\
=& \exp \left( -\frac{2\pi \lambda_m}{\beta^2} \left(1-\left(1+\beta r \right) e^{-\beta r} \right)\right), \label{CCDFeqL}
\end{align}
where $\left(a\right)$ is due to the null probability \cite{HaenggiLarge09} and $\left(b\right)$ comes from the definition of $\lambda_L$ and $P_L\left(r \right)$. Following a similar approach, we can calculate $\text{Pr}\left\lbrace \Phi_N \left( B \left(0,R^{-1}_{eq}\left(r\right) \right) \right) = 0 \right\rbrace$.
Finally, by inserting into (\ref{CCDFEeq}) and considering the fact that $f_R \left( r \right) = - \frac{d}{dr}\text{Pr}\left\lbrace R>r \right\rbrace$, the proof is completed.
\end{proof}
\subsection{Homogeneous Cellular Network}
\noindent From \eqref{REcForMBSonly}, the received SINR at the typical user is obtained as
\begin{equation*}
\text{SINR} = \frac{P_m L_m \left( r_{0,0} \right) D_0 G_0 \lvert h_{0,0} \rvert^2}{\sum_{j\neq 0,\: X_j \in \Phi_m}{P_m L_m \left( r_{j,0} \right) D_j G_j \lvert h_{j,0} \rvert^2}+\sigma^2} \ .
\end{equation*}
It is assumed that the main beam of the typical user and its serving MBS's antennas are aligned, and therefore $D_0 = M_t M_r$. Then, the coverage probability is calculated in the following theorem.
\begin{theorem}\label{THeo1}
The coverage probability of the typical user associated with the MBS that provides the highest received power is obtained as
\begin{align}\label{PcMBSmaineq}
&{\mathcal{P}}^c \left(\gamma , \theta_{\text{tilt}}\right) = \text{Pr}\left\lbrace \text{SINR}>\gamma \right\rbrace = \qquad \qquad \qquad \qquad \nonumber \\ 
& \int_0^{\infty}{e^{-m s \sigma^2} \sum_{k=0}^{m-1}{\sum_{\ell = 0}^k{C_{k,\ell}} \left[\frac{d^{\ell}}{d z^{\ell}} {\mathcal{L}}_{I_{\Phi_m}}\left(z,\theta_{\text{tilt}} \right) \right]_{z = m s}} f_R \left(\rho \right) d\rho},
\end{align}
where $s$ is defined as $s =\frac{\gamma \rho^{\alpha_L}}{ P_m C_L D_0 G_0 }$ and $C_{k,\ell} = \left(-1 \right)^{\ell} \frac{\left(m s \sigma^2 \right)^{k-\ell}}{k!} {k \choose \ell}$, $\gamma$ is the SINR threshold for the typical user, and ${\mathcal{L}}_{I_{\Phi_m}} \left( z,\theta_{\text{tilt}} \right) $ represents the Laplace transform of $ I_{\Phi_m}(\theta_{\text{tilt}}) = \sum_{j\neq 0,\: X_j \in \Phi_m}{P_m L_m\left(r_{i,0}\right)D_j G_j \lvert h_{j,0} \rvert^2 }$ as 
\begin{align} \label{LaplaceofIntm} 
&{\mathcal{L}}_{I_{\Phi_m}}= E_{I_{\Phi_m}}\left[ \exp \left(-z I_{\Phi_m}\left( \theta_{\text{tilt}} \right) \right) \right] = \nonumber\\
& \prod_{i=1}^4 {\exp \left( -C_i \int_{\rho}^{\infty}{ F_L\left(z,x,d_i,\theta_{\text{tilt}} \right) x P_L \left(x \right) dx } \right) } \times \nonumber \\
& \prod_{i=1}^4 {\exp \left( -C_i \int_{R^{-1}_{eq}(\rho)}^\infty{F_N\left(z,x,d_i ,\theta_{\text{tilt}}\right) xP_N \left(x \right) dx}\right)}.
\end{align} 
Here we define $C_i = 2\pi \lambda_m p_i$ and
\begin{equation*}
F_w\left(z,x,d_i ,\theta_{\text{tilt}} \right) =1- \frac{1}{\left(1+\frac{z P_m C_w d_i G\left(x,\theta_{\text{tilt}} \right)}{m x^{\alpha_w}} \right)^m} \ , \ w \in \lbrace L,N \rbrace.
\end{equation*} 
\end{theorem}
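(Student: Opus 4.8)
The plan is to condition on the serving-BS distance $\rho = r_{0,0}$ and exploit the Nakagami-$m$ CCDF to turn the coverage event into a weighted sum of Laplace-transform evaluations of the aggregate interference. First I would rewrite the coverage event $\{\text{SINR} > \gamma\}$ as $\{|h_{0,0}|^2 > s(I_{\Phi_m}+\sigma^2)\}$, where, because under the equivalent-distance mapping the serving link carries the LOS parameters $L_m(\rho)=C_L\rho^{-\alpha_L}$ and the aligned-beam assumption gives $D_0 = M_t M_r$, the threshold factor is exactly $s = \gamma\rho^{\alpha_L}/(P_m C_L D_0 G_0)$. Applying the Nakagami CCDF $\bar F(z) = e^{-mz}\sum_{k=0}^{m-1}(mz)^k/k!$ with $z = s(I_{\Phi_m}+\sigma^2)$ gives, conditioned on $\rho$ and on the interference, the expression $\sum_{k=0}^{m-1}\frac{(ms)^k}{k!}(I_{\Phi_m}+\sigma^2)^k e^{-ms(I_{\Phi_m}+\sigma^2)}$.

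Next I would take the expectation over the interference. The standard device is to note that $(I_{\Phi_m}+\sigma^2)^k e^{-ms(I_{\Phi_m}+\sigma^2)}$ is generated by differentiating $e^{-z I_{\Phi_m}}$: expanding $(I_{\Phi_m}+\sigma^2)^k$ by the binomial theorem and using the moment identity $E[I_{\Phi_m}^\ell e^{-z I_{\Phi_m}}] = (-1)^\ell \frac{d^\ell}{dz^\ell}\mathcal{L}_{I_{\Phi_m}}(z)$ produces exactly the double sum $\sum_{k}\sum_{\ell\le k} C_{k,\ell}[\frac{d^\ell}{dz^\ell}\mathcal{L}_{I_{\Phi_m}}]_{z=ms}$ with the combinatorial/noise coefficient $C_{k,\ell}$. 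Pulling out the deterministic factor $e^{-ms\sigma^2}$ and finally averaging over $\rho$ against the density $f_R$ supplied by Lemma~1 yields \eqref{PcMBSmaineq}.

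The remaining and most substantial ingredient is the explicit Laplace transform $\mathcal{L}_{I_{\Phi_m}}(z,\theta_{\text{tilt}})$ in \eqref{LaplaceofIntm}, which I would compute via the marked PGFL of the PPP. Three independent sources of randomness must be averaged per interferer: the Nakagami fading, the discrete horizontal gain $D_j\in\{d_i\}$, and the LOS/NLOS state. The fading average is $E_h[e^{-zP_m L_m d_i G|h|^2}] = (1+zP_m L_m d_i G/m)^{-m}$ from the Gamma MGF, which after substituting $L_m(x) = C_w x^{-\alpha_w}$ is precisely $1 - F_w$. Averaging over $D_j$ splits the PGFL exponent into the four $p_i$-weighted terms (giving $C_i = 2\pi\lambda_m p_i$), while splitting $\Phi_m$ into the independent LOS and NLOS processes $\Phi_L,\Phi_N$ of densities $\lambda_m P_L,\lambda_m P_N$ accounts for the two products over $w\in\{L,N\}$, with the polar-coordinate reduction $\int_{\mathbb{R}^2}\!\to 2\pi\int x\,dx$ supplying the factor $x$ inside each integrand.

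The step I expect to be the main obstacle is getting the integration limits right, since they encode the maximum-average-power association. Under the equivalent-distance mapping an interfering LOS BS must lie farther than the serving distance $\rho$, whereas an NLOS interferer at actual distance $x$ contributes only when its mapped distance exceeds $\rho$, i.e.\ $x > R_{eq}^{-1}(\rho)$; this is what fixes the lower limits $\rho$ and $R_{eq}^{-1}(\rho)$ in the two integrals rather than $0$. Carefully justifying these association-dependent limits is where the argument needs the most care, after which the derivative bookkeeping of the first two paragraphs is purely mechanical.
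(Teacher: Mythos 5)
Your proposal is correct and follows essentially the same route as the paper's proof: conditioning on the serving distance $\rho$, applying the Nakagami-$m$ (Gamma) CCDF to convert the coverage event into a sum of interference moments expressed as derivatives of the Laplace transform, and then computing $\mathcal{L}_{I_{\Phi_m}}$ by splitting $\Phi_m$ into the independent LOS/NLOS processes and applying the Gamma MGF, the discrete horizontal-gain average, and the PGFL with exclusion radii $\rho$ and $R_{eq}^{-1}(\rho)$. In fact, you spell out two steps the paper leaves implicit — the binomial expansion with the moment identity $E[I^{\ell}e^{-zI}]=(-1)^{\ell}\frac{d^{\ell}}{dz^{\ell}}\mathcal{L}_{I}(z)$, and the association-based justification of the integration limits — so no gap remains.
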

\begin{proof}
${\mathcal{P}}^c$ can be obtained as
\begin{align*}
&{\mathcal{P}}^c \left(\gamma , \theta_{\text{tilt}}\right) = E_{\rho,I_{\Phi_m}} \left\{ \text{Pr} \left\{ \text{SINR} > \gamma \, \bigg{\vert} \, r_{0,0} = \rho \right\} \right\} \nonumber \\
& \stackrel{\left( a \right) }{=} E_{\rho,I_{\Phi_m}} \left\{ \text{Pr} \left\{ \lvert h_{0,0} \rvert^2 > \frac{\gamma {\rho}^{\alpha_L} }{ P_m C_L D_0 G_0} \left( I_{\Phi_m} + \sigma^2 \right) \bigg{\vert} r_{0,0} = \rho \right\} \right\} \nonumber \\
&\stackrel{\left( b \right) }{=} E_{\rho,I_{\Phi_m}}\left\lbrace e^{-m s\left(I_{\Phi_m}+\sigma^2 \right)} \sum_{k=0}^{m-1}{\frac{\left(m s \left(I_{\Phi_m}+\sigma^2 \right)\right)^k}{k!}}\right\rbrace \nonumber \\
&= \int_0^{\infty}{e^{-m s \sigma^2} \sum_{k=0}^{m-1}{\sum_{\ell = 0}^k{C_{k,\ell}} \left[\frac{d^{\ell}}{d z^{\ell}} {\mathcal{L}}_{I_{\Phi_m}}\left(z,\theta_{\text{tilt}} \right) \right]_{z = m s}} f_R \left(\rho \right) d\rho},
\end{align*}
where $E \left\lbrace . \right\rbrace$ denotes the expectation operator and $\left( a \right)$ follows from $L_m\left(r_{0,0}\right) = C_L r_{0,0}^{-\alpha_L}$ for the maximum received power association method, and $\left( b \right)$ comes from the fact that $\lvert h_{0,0} \rvert^2 \sim \Gamma\left( m ,\frac{1}{m} \right)$. 

Also, $I_{\Phi_m} = \sum_{j\neq 0,\: X_j \in \Phi_L }{P_m C_L r_{j,0}^{-\alpha_L} D_j G_j \lvert h_{j,0} \rvert^2}+ \sum_{j\neq 0,\: X_j \in \Phi_N}{P_m C_N r_{j,0}^{-\alpha_N} D_j G_j \lvert h_{j,0} \rvert^2} = I_{\Phi_L} + I_{\Phi_N}$.

Since $\Phi_L$ and $\Phi_N$ are independent, ${\mathcal{L}}_{I_{\Phi_m}}\left( s,\theta_{\text{tilt}} \right)$ can be written as
\begin{align} \label{LaplaceTransEq}
{ \mathcal{L}}_{I_{\Phi_m}} \left( s,\theta_{\text{tilt}} \right) &= E_{I_{\Phi_m}} \left\lbrace \exp \left(- s I_{\Phi_m} \right) \right\rbrace \nonumber \\
&= E_{I_{\Phi_L}} \left\lbrace \exp \left(- s I_{\Phi_L} \right) \right\rbrace E_{I_{\Phi_N}} \left\lbrace \exp \left(- s I_{\Phi_N} \right) \right\rbrace \nonumber \\
&= {\mathcal{L}}_{I_{\Phi_L}}\left( s \right) {\mathcal{L}}_{I_{\Phi_N}}\left( s \right).
\end{align}
Hence we calculate ${\mathcal{L}}_{I_{\Phi_L}}\left( s \right)$ as
\begin{align*}
&{\mathcal{L}}_{I_{\Phi_L}} = E_{I_{\Phi_L}} \left\{ \exp \left(-s \sum_{\substack{j\neq 0,\\ X_j \in \Phi_L}}{P_m C_L r_{j,0}^{-\alpha_L} D_j G_j \lvert h_{j,0} \rvert^2} \right) \right\} \nonumber \\
& = E_{\Phi_L, h_{j,0}, D_j} \left\{ \prod_{\substack{j\neq 0,\\ X_j \in \Phi_L}}{\exp{ \left(-s P_m C_L r_{j,0}^{-\alpha_L} D_j G_j \lvert h_{j,0} \rvert^2 \right) }} \right\} \nonumber \\
& \stackrel{\left( a \right) }{=} E_{\Phi_L} \left\{ \prod_{\substack{j\neq 0,\\ X_j \in \Phi_L} }{{E} \left\{ \frac{1}{\left(1+\frac{s}{m} P_m C_L r_{j,0}^{-\alpha_L}D_j G_j \right)^m} \right\}} \right\} \nonumber \\
& \stackrel{\left( b \right) }{=} \prod_{i=1}^4 {\exp \left( -C_i \int_{\rho}^\infty F_L\left(s,v,d_i,\theta_{\text{tilt}} \right) v P_L\left( v \right) dv \right)}, 
\end{align*}
where $\left( a \right)$ and $\left( b \right)$ are derived from the fact $\lvert h_{j,0} \rvert^2 \sim \Gamma \left(m,\frac{1}{m}\right) $ and the definition of the total antenna gain in horizontal domain, and also from the probability generating functional (PGFL) of the PPP \cite{SGwireles09}. Then ${\mathcal{L}}_{I_{\Phi_N}}\left( s \right)$ is computed by a similar method. Substituting ${\mathcal{L}}_{I_{\Phi_L}}\left( s \right)$ and ${\mathcal{L}}_{I_{\Phi_N}}\left( s \right)$ into \eqref{LaplaceTransEq} with $s = \frac{\gamma {\rho}^{\alpha_L}}{P_m C_L D_0 G_0}$, the proof is completed.
\end{proof}
In the following, by using the coverage probability in \eqref{PcMBSmaineq}, the EE of the network is calculated. The EE is defined as \cite{EETotuGen,Largantensys15}
\begin{align}\label{EEMBSHom} 
\text{EE}(\theta_{tilt}) = \frac{{\mathcal{P}}^c\left(\gamma , \theta_{\text{tilt}}\right) \log_2\left( 1+\gamma \right)}{P_{cm} + \eta_m P_m},
\end{align}
\noindent where $P_{cm}$ indicate the power consumption related to the signal processing and cooling, and $\eta_m$ is the power amplifier efficiency of each BS. By substituting \eqref{PcMBSmaineq} and \eqref{LaplaceofIntm} in \eqref{EEMBSHom}, we have

\begin{align}\label{EEMBSHom2}
&\text{EE}(\theta_{tilt}) =\nonumber \\
&\frac{\int_0^\infty {e^{- ms \sigma^2 } E_{I_{\Phi_m}}\left[ e^{-ms I_{\Phi_m}} \sum_{k=0}^{m-1}\frac{\left( ms \left(I_{\Phi_m}+\sigma^2 \right) \right)^k}{k!} \right] f_R \left(\rho \right) d\rho}}{ P_{cm}+\eta_m P_m}.
\end{align}

\subsection{Two Tier Heterogeneous Network}
In this scenario, it is assumed that location of the MBSs and FBSs are modeled by two independent PPP, $\Phi_m$ and $\Phi_f$ with densities $\lambda_m$ and $\lambda_f$, respectively. According to equations \eqref{TwoTierMBS} and \eqref{TwoTierFemto}, the SINR in the typical macro and femto users are given as
\begin{align*}
\text{SINR}_m & = \frac{P_m L_m\left( r^m_{0,0} \right) D_0 G_0 \lvert h_{0,0} \rvert^2}{I_{\Phi_m}^m + I_{\Phi_f^{'}}^m +\sigma^2}, \\
\text{SINR}_f & = \frac{P_f L_f\left( r^f_{0,0} \right) D_0^f \lvert g_{0,0} \rvert^2}{I_{\Phi_m}^f + I_{\Phi_f^{'}}^f +\sigma^2},
\end{align*}
where $I_{\Phi_m}^m = \sum_{\substack{j\neq 0,\\ X_j \in \Phi_m}}{P_m L_m\left( r^m_{j,0} \right) D_j G_j \lvert h_{j,0} \rvert^2}$, $I_{\Phi_f^{'}}^m = \sum_{\substack{j,\: Y_j \in \Phi'_f } }{P_f \ell_W L_m^f\left( r^{fm}_{j,0} \right) D_j^{fm} \lvert g^f_{j,0} \rvert^2}$, $I_{\Phi_m}^f = \sum_{\substack{j,\: X_j \in \Phi_m}}$ ${P_m \ell_W L_f^m\left( r^{mf}_{j,0} \right) D^{mf}_j G_j \lvert h^m_{j,0} \rvert^2}$ and $I_{\Phi_f^{'}}^f = \sum_{\substack{j\neq 0,\\ Y_j \in \Phi'_f}}P_f \left(\ell_W \right)^2$ $ L_f\left( r^f_{j,0} \right) D_j^f \lvert g_{j,0} \rvert^2$. It is assumed that the main beam of the typical femto user and its serving FBS’s antennas are aligned i.e., $D_0^f = M_t^f M_r^f$.

Because of sleep regions, the FBSs are modeled by a Poisson hole process. According to \cite{HaenggiLarge09}, $\Phi'_f$, has the density
\begin{equation*}
\lambda_{f'} = \lambda_f \exp\left( - \lambda_m \pi R_c^2 \right). \label{lambfprim}
\end{equation*}
The following theorem provides the coverage probability of the macro users.
\begin{theorem}\label{THeo2}
In the HetNet scenario, by considering the sleep region around each MBS, the coverage probability of a typical macro user is expressed as
\begin{align}\label{MacromainCov}
&{\mathcal{P}}_m^c \left(\gamma_m , \theta_{\text{tilt}},R_c\right) = \nonumber \\
& \int_0^\infty{\sum_{k=0}^{m-1}{\sum_{\ell = 0}^k{C_{k,\ell} \left[\frac{d^{\ell}}{d z^{\ell}} {\mathcal{L}}_{I_{\Phi_{m,f}}}\left(z,\theta_{\text{tilt}} \right) \right]_{z = ms} }}} f_R \left(\rho \right) d\rho,
\end{align}
where ${\mathcal{L}}_{I_{\Phi_{m,f}}}\left(z,\theta_{\text{tilt}} \right) = \left( {\mathcal{L}}_{I_{\Phi_m}}\left(z,\theta_{\text{tilt}} \right) {\mathcal{L}}_{I^m_{\Phi'_f}}\left(z,\theta_{\text{tilt}} \right) \right)$ and $s = \frac{\gamma_m {\rho}^{\alpha_L}}{P_m C_L D_0 G_0}$, $\gamma_m$ represents the SINR threshold for the typical macro user and ${\mathcal{L}}_{I_{\phi'_f}^m}\left(z,\theta_{\text{tilt}}\right)$ indicates the Laplace transform of the interference from the FBSs to the macro user as
\begin{align}\label{LapIntFBStoMacuser}
&{\mathcal{L}}_{I_{\phi'_f}^m} \nonumber \\
& = \prod_{i=1}^4{\exp \left(-C_i^{f'} \left(s P_f \ell_W C_N d_i^{fm} \right)^{\frac{2}{\alpha_N}} \frac{\pi}{\alpha_N \sin\left(\frac{2\pi}{\alpha_N} \right)} \right)},
\end{align}
where $C_i^{f'} = 2\pi \lambda_{f'} p_i^{fm}$.
\end{theorem}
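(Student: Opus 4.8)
The plan is to mirror the derivation of Theorem~\ref{THeo1}, since the typical macro user is still served by an MBS and therefore its serving distance $\rho$ obeys the same PDF $f_R(\rho)$ established in \eqref{PDFNBBL}. First I would write ${\mathcal{P}}_m^c = E_{\rho}\{\text{Pr}\{\text{SINR}_m>\gamma_m \mid r^m_{0,0}=\rho\}\}$ and, using $L_m(r^m_{0,0})=C_L \rho^{-\alpha_L}$ for the aligned serving link together with $\lvert h_{0,0}\rvert^2\sim\Gamma(m,1/m)$, invoke the Nakagami CCDF expansion exactly as in step~$(b)$ of the proof of Theorem~\ref{THeo1}. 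This turns the event $\{\text{SINR}_m>\gamma_m\}$ into the same finite double sum over $k$ and $\ell$ with coefficients $C_{k,\ell}$ and the $\ell$-th $z$-derivative of the Laplace transform of the total interference evaluated at $z=ms$, which is precisely the outer structure of \eqref{MacromainCov}.

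The key new ingredient is that the aggregate interference now splits as $I^m_{\Phi_m}+I^m_{\Phi'_f}$, i.e. the co-tier interference from the other MBSs plus the cross-tier interference from the active FBSs. Because $\Phi_m$ and $\Phi'_f$ are independent point processes, the Laplace transform of their sum factorizes, ${\mathcal{L}}_{I_{\Phi_{m,f}}}={\mathcal{L}}_{I_{\Phi_m}}\,{\mathcal{L}}_{I^m_{\Phi'_f}}$. The MBS factor ${\mathcal{L}}_{I_{\Phi_m}}$ is identical to the one already obtained in \eqref{LaplaceofIntm}, so the only quantity that remains to be computed is the FBS factor ${\mathcal{L}}_{I^m_{\Phi'_f}}$.

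To evaluate ${\mathcal{L}}_{I^m_{\Phi'_f}}$ I would model the active FBSs (those lying outside the sleep regions) as a homogeneous PPP of thinned density $\lambda_{f'}=\lambda_f e^{-\lambda_m \pi R_c^2}$, as stated just before the theorem. Since the FBS-to-macro links are NLOS with $g^f_{j,0}\sim{\mathcal{CN}}(0,1)$, the fading powers $\lvert g^f_{j,0}\rvert^2$ are exponential, so averaging each term over the fading yields a factor $1/(1+z P_f \ell_W C_N d_i^{fm} x^{-\alpha_N})$. Conditioning on the four horizontal-gain states $d_i^{fm}$ with probabilities $p_i^{fm}$ and applying the PGFL of the PPP then produces a product of four exponentials whose exponents are the radial integrals $-C_i^{f'}\int_0^\infty \bigl(1-1/(1+z P_f \ell_W C_N d_i^{fm} x^{-\alpha_N})\bigr)\,x\,dx$. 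A scaling substitution reduces each integral to $\int_0^\infty u/(1+u^{\alpha_N})\,du=(\pi/\alpha_N)/\sin(2\pi/\alpha_N)$, valid for $\alpha_N>2$, which delivers the closed form \eqref{LapIntFBStoMacuser}.

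I expect the main subtlety to be the treatment of the FBS interference rather than any single integral. The active FBSs genuinely constitute a Poisson hole process, so the PGFL does not apply exactly; the step that replaces it by a homogeneous PPP of density $\lambda_{f'}$ (and correspondingly integrates over all of $[0,\infty)$ rather than excluding the hole carved out around the serving MBS) is an approximation that ignores the spatial correlation between the typical user's serving MBS and the locations of the holes. Making this simplification explicit, and noting that it is the standard first-order treatment of a Poisson hole process, is the part of the argument that needs the most care; the remaining steps are direct transcriptions of the homogeneous-network proof of Theorem~\ref{THeo1}.
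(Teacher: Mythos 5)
Your proposal follows essentially the same route as the paper's proof: conditioning on the serving distance with the PDF $f_R(\rho)$ from \eqref{PDFNBBL}, expanding the Nakagami/Gamma tail into the finite double sum with coefficients $C_{k,\ell}$ and $\ell$-th derivatives of the Laplace transform at $z=ms$, factorizing ${\mathcal{L}}_{I_{\Phi_{m,f}}}={\mathcal{L}}_{I_{\Phi_m}}{\mathcal{L}}_{I^m_{\Phi'_f}}$ by independence, reusing \eqref{LaplaceofIntm} for the MBS factor, and evaluating the FBS factor via exponential fading, the PGFL at the thinned density $\lambda_{f'}$, and the substitution giving $\int_0^\infty u/(1+u^{\alpha_N})\,du = \frac{\pi/\alpha_N}{\sin(2\pi/\alpha_N)}$. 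You are in fact more careful than the paper on one point: the paper applies the PGFL to $\Phi'_f$ directly without acknowledging that this treats the Poisson hole process as a homogeneous PPP of density $\lambda_{f'}$, an approximation you correctly flag.
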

\begin{proof}
Adopting a similar approach to theorem \ref{THeo1}, we have
\begin{align*}
& {\mathcal{P}}_m^c \left(\gamma_m , \theta_{\text{tilt}},R_c\right) =\text{Pr} \left\{ \text{SINR}_m > \gamma_m \right\} \nonumber \\
&=\int_0^{\infty}{\text{Pr}{ \left\lbrace \frac{P_m C_L r^{-\alpha_L} D_0 G_0 \lvert h_{0,0} \rvert^2}{I_{\Phi_m}^m + I^m_{\Phi'_f} + \sigma^2} >\gamma_m \bigg{\vert} r =\rho \right\rbrace } f_R \left( \rho \right) d\rho}.
\end{align*}
We can compute the probability inside the integral as
\begin{align*}
&\text{Pr}{ \left\lbrace \frac{P_m C_L r^{-\alpha_L} D_0 G_0 \lvert h_{0,0} \rvert^2}{I_{\Phi_m}+I^m_{\Phi'_f} + \sigma^2} >\gamma_m \mid r =\rho \right\rbrace} \nonumber \\
& = \sum_{k=0}^{m-1}{\sum_{\ell = 0}^k{C_{k,\ell} \left[\frac{d^{\ell}}{d z^{\ell}} \left( {\mathcal{L}}_{I_{\Phi_m}}\left(z,\theta_{\text{tilt}} \right) {\mathcal{L}}_{I^m_{\Phi'_f}}\left(z,\theta_{\text{tilt}} \right) \right) \right]_{z = ms} }},
\end{align*}

where we use the fact that $\lvert h_{0,0} \rvert^2$ has Gamma distribution. 

Similarly, we have 
\begin{align*}
&{\mathcal{L}}_{I^m_{\Phi'_f}} \nonumber \\
&= E \left\lbrace \exp \left(-s \sum_{\substack{j,\: Y_j \in \Phi'_f }}{P_f \ell_W C_N \left( r^{fm}_{j,0}\right) ^{-\alpha_N} D_j^{fm} \lvert g^f_{j,0} \rvert^2} \right) \right\rbrace \nonumber \\
&\stackrel{\left(a \right)}{=} E_{\Phi'_f, D_j^{fm}} \left\lbrace \prod_{\substack{j,\: Y_j \in \Phi'_f }}{\frac{1}{1+sP_f \ell_W C_N \left( r^{fm}_{j,0}\right) ^{-\alpha_N} D_j^{fm}}} \right\rbrace \nonumber \\
& \stackrel{\left( b \right)}{=} \prod_{i=1}^4 \exp{\left( -C_i^{f'} \int_0^{\infty}{\frac{x}{1+\left(s P_f \ell_W C_N d_i^{fm} \right)^{-1}x^{\alpha_N}} dx}\right)},
\end{align*}
where $\left( a \right)$ is due to the fact $\lvert g_{j,0}^f \rvert^2 \sim \text{exp} \left(1 \right) $ and $\left( b \right)$ is derived from the PGFL of the PPP $\Phi'_f$ \cite{SGwireles09}.
\end{proof}
Next, we derive the coverage probability of the typical femto user in the following theorem.
\begin{theorem}
In the HetNet scenario, the coverage probability of the typical femto user is expressed as
\begin{align}\label{FemUsermainCovP}
&{\mathcal{P}}_f^c \left(\gamma_f , \theta_{\text{tilt}},R_c\right) = \exp\left(-\pi \lambda_m R_c^2 \right) \times \nonumber \\
&\int_0^{R_f}{e^{-s_f\sigma^2} {\mathcal{L}}_{I_{\Phi_m}^f}\left(s_f,\theta_{\text{tilt}}\right) {\mathcal{L}}_{I^f_{\phi'_f}}\left(s_f, \theta_{\text{tilt}} \right) g_R \left(\rho \right) d\rho d\rho},
\end{align}
where $s_f = \frac{\gamma_f}{P_f C_L D_0^f} {\rho}^{\alpha_L}$, $\gamma_f$ denotes the SINR threshold for the typical femto user, and $ {\mathcal{L}}_{I_{\Phi_m}^f}\left(s_f,\theta_{\text{tilt}} \right)$ and ${\mathcal{L}}_{I^f_{\Phi'_f}}\left(s_f,\theta_{\text{tilt}} \right)$ are obtained as
\begin{align}
&{ \mathcal{L}}_{I_{\Phi_m}^f} = \prod_{i=1}^4{\exp \left( - C_i^{mf} \int_0^{\infty}{\frac{x}{1+\frac{x^{\alpha_N}}{s_f P_m \ell_W C_N d_i^{mf} G\left(x,\theta_{\text{tilt}} \right)} }} dx \right)} \label{LapfmHetNet} \\
&{\mathcal{L}}_{I^f_{\Phi'_f}} = \prod_{i=1}^4{\exp \left(-2\pi \lambda_{f'} p_i^f \int_{\rho}^{\infty}{\frac{x}{1+\frac{x^{\alpha_N}}{s_f P_f \left( \ell_W \right)^2 C_N d_i^f } } dx} \right)}, \label{LapffHetNet}
\end{align}
where $C_i^{mf} = 2\pi \lambda_m p_i^{mf}$ and $g_R \left(\rho \right) = \frac{2\rho}{R_f^2}$. 
\end{theorem}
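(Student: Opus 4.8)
The plan is to follow the template of Theorems \ref{THeo1} and \ref{THeo2}, but with two structural simplifications. Because the serving fading $g_{0,0}\sim\mathcal{CN}(0,1)$ is Rayleigh rather than Nakagami-$m$, the power $|g_{0,0}|^2$ is $\exp(1)$, so the Gamma tail collapses to a single exponential term and the outer sum over $k$ (present in \eqref{PcMBSmaineq} and \eqref{MacromainCov}) disappears; and the conditioning distance is now the \emph{uniform} FBS--user distance rather than the nearest-BS distance. First I would write the coverage event as the intersection of two events: (i) the serving FBS is \emph{active}, i.e.\ no MBS lies within $R_c$ of it, and (ii) $\text{SINR}_f>\gamma_f$. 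By the void probability of the homogeneous PPP $\Phi_m$, the probability that the ball of radius $R_c$ centred at the serving FBS is empty of MBSs is $\exp(-\lambda_m\pi R_c^2)$, and this factor detaches from the SINR event, producing the leading $\exp(-\pi\lambda_m R_c^2)$ in \eqref{FemUsermainCovP}.

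Second, conditioning on the serving distance $r^f_{0,0}=\rho$, whose PDF is $g_R(\rho)=2\rho/R_f^2$ since the user is uniform in the disk of radius $R_f$, I would use $|g_{0,0}|^2\sim\exp(1)$ together with the LOS serving law $L_f(\rho)=C_L\rho^{-\alpha_L}$ and $D_0^f=M_t^f M_r^f$ to convert the tail probability into a product of Laplace transforms:
\[
\Pr\{|g_{0,0}|^2 > s_f(I^f_{\Phi_m}+I^f_{\Phi'_f}+\sigma^2)\} = e^{-s_f\sigma^2}\,\mathcal{L}_{I^f_{\Phi_m}}(s_f)\,\mathcal{L}_{I^f_{\Phi'_f}}(s_f),
\]
with $s_f=\gamma_f\rho^{\alpha_L}/(P_f C_L D_0^f)$. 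The factorisation into MBS and FBS Laplace transforms rests on the independence of $\Phi_m$ and $\Phi'_f$, exactly as in \eqref{LaplaceTransEq}.

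Third, I would evaluate each Laplace transform through the PGFL of the relevant PPP, averaging first over the per-interferer $\exp(1)$ fading (which yields the $1/(1+\cdots)$ kernels) and then over the four-valued horizontal gain. For the MBS-to-femto interference the link is always NLOS, so I use $L^m_f(r)=C_N r^{-\alpha_N}$, retain the vertical gain $G(x,\theta_{\text{tilt}})$ because the MBS performs 3DBF, integrate from $0$ to $\infty$ against density $C_i^{mf}=2\pi\lambda_m p_i^{mf}$, and obtain \eqref{LapfmHetNet}. For the residual-FBS interference I use the NLOS law with the squared wall attenuation $(\ell_W)^2$, no vertical gain (FBS antennas are omnidirectional in the vertical plane), and the thinned density $\lambda_{f'}=\lambda_f\exp(-\lambda_m\pi R_c^2)$; integrating from $\rho$ to $\infty$ (an active interfering FBS is taken no closer than the serving one, which is the tight lower-bounding step announced in the introduction) gives \eqref{LapffHetNet}. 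Substituting these back and integrating over $\rho$ with weight $g_R(\rho)$ then assembles \eqref{FemUsermainCovP}.

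The main obstacle is justifying the independence and thinning steps induced by the sleep region. Modelling the active FBSs $\Phi'_f$ as a homogeneous PPP of density $\lambda_{f'}$ is only an approximation, since a Poisson hole process is not Poisson and its points are negatively correlated near MBSs; I would invoke the standard approximation of \cite{HaenggiLarge09} and argue the correlations are negligible at the densities of interest. Equally delicate is decoupling the ``serving FBS active'' event from both interference fields and from the uniform user location: strictly, the MBS-free ball around the serving FBS slightly depresses the MBS interference seen by the nearby user, yet the integral in \eqref{LapfmHetNet} runs from $0$, ignoring this weak coupling. I would appeal to a Slivnyak-type reduction---the serving FBS's exclusion region is statistically that of a typical point---so that the void factor separates cleanly and the remaining interference expectations may be taken over the unconditioned (thinned) processes.
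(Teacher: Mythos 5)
Your proposal follows essentially the same route as the paper: factor out the serving FBS's activity probability $\exp\left(-\pi \lambda_m R_c^2\right)$ via the void probability of $\Phi_m$, then condition on the uniform serving distance and exploit the exponential fading power to obtain the product of interference Laplace transforms, each evaluated through the PGFL — which is exactly what the paper does (its proof states this factorization and defers the remaining computation to the method of Theorem \ref{THeo2}). Your added caveats about the Poisson-hole approximation and the decoupling of the activity event from the interference fields are sound and in fact more careful than the paper's own terse treatment, which assumes both implicitly.
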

\begin{proof}
According to the hole point process, the probability that a FBS outside of a sleep region is active equals $\exp \left(-\pi \lambda_m R_c^2\right)$. Thus we have
\begin{equation}
{\mathcal{P}}^c_f \left(\gamma_f , \theta_{\text{tilt}},R_c\right) = \exp\left(-\pi \lambda_m R_c^2 \right) \text{Pr} \left\lbrace \text{SINR}_f > \beta_f \right\rbrace.
\end{equation}
The rest of calculation is similar to theorem \ref{THeo2} and thus is not repeated here.
\end{proof}
In the following lemma, we derive a lower bound on the terms of the coverage probability of the typical femto user.
\begin{lemma}
In the HetNet scenario, ${\mathcal{L}}_{I_{\Phi_m}^f}$ and ${\mathcal{L}}_{I_{\Phi_m^{'}}^f}$ in \eqref{LapfmHetNet} and \eqref{LapffHetNet} can be respectively lower bounded as 
\begin{align}
&{\mathcal{L}}_{I_{\Phi_m}^f} \ge \nonumber \\
&\exp\left( -2\pi \lambda_m \left(s_f P_m \ell_w C_N G_{\text{max}} \right)^{\frac{2}{\alpha_N}} \frac{\pi E\left\lbrace {D^{mf}}^{\frac{2}{\alpha_N}} \right\rbrace}{\alpha_N \sin \frac{2\pi }{\alpha_N}} \right) \label{LapfmHetNetlowbound} \\
&{\mathcal{L}}_{I_{\Phi_m^{'}}^f} \ge \nonumber \\
& \exp\left( -2\pi \lambda_{f'} \left(s_f P_f \left( \ell_w \right)^2 C_N \right)^{\frac{2}{\alpha_N}} \frac{\pi E\left\lbrace {D^f}^{\frac{2}{\alpha_N}} \right\rbrace}{\alpha_N \sin \frac{2\pi}{\alpha_N}} \right). \label{LapffHetNetlowbound}
\end{align}
\end{lemma}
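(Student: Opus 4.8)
The plan is to observe that both Laplace transforms in \eqref{LapfmHetNet} and \eqref{LapffHetNet} have the structure $\exp(-(\text{nonnegative integral}))$, so that a \emph{lower} bound on each ${\mathcal{L}}$ is equivalent to an \emph{upper} bound on the integral appearing in the exponent. Since the exponential is monotone and the prefactors $C_i^{mf} = 2\pi\lambda_m p_i^{mf}$ and $2\pi\lambda_{f'}p_i^f$ are positive, it suffices to overestimate each integrand and then evaluate the resulting elementary integral in closed form. In both cases the target integral reduces, after the substitution $u = x/a^{1/\alpha_N}$, to the standard Mellin-type identity
\begin{equation*}
\int_0^\infty \frac{x}{1 + x^{\alpha_N}/a}\,dx = a^{2/\alpha_N}\int_0^\infty \frac{u}{1+u^{\alpha_N}}\,du = a^{2/\alpha_N}\,\frac{\pi}{\alpha_N \sin(2\pi/\alpha_N)},
\end{equation*}
valid for $\alpha_N > 2$, where $a$ collects the constants in the denominator.

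For \eqref{LapfmHetNetlowbound}, I would first bound the integrand of ${\mathcal{L}}_{I_{\Phi_m}^f}$ from above by replacing the vertical gain $G(x,\theta_{\text{tilt}})$ with its supremum $G_{\text{max}}$. Because $G$ sits in the denominator of the inner fraction, enlarging it shrinks that fraction and hence the whole denominator $1 + x^{\alpha_N}/(s_f P_m \ell_W C_N d_i^{mf} G)$, which \emph{increases} the integrand; thus the substitution is in the correct direction for an upper bound. With $G$ now constant, the integral over $[0,\infty)$ is evaluated with $a = s_f P_m \ell_W C_N d_i^{mf} G_{\text{max}}$ using the identity above. Finally, pulling the $i$-independent factors out of the product and using $\sum_{i=1}^4 p_i^{mf}(d_i^{mf})^{2/\alpha_N} = E\{(D^{mf})^{2/\alpha_N}\}$ collapses the four exponentials into the single factor claimed in \eqref{LapfmHetNetlowbound}.

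For \eqref{LapffHetNetlowbound}, the integrand of ${\mathcal{L}}_{I_{\Phi'_f}^f}$ carries no vertical-gain term (the FBS--femto links are omnidirectional in the vertical plane), so here the slack comes instead from the lower integration limit: since the integrand is nonnegative, I would extend the range from $[\rho,\infty)$ down to $[0,\infty)$, which can only increase the integral and therefore decrease ${\mathcal{L}}$. The closed-form value with $a = s_f P_f (\ell_W)^2 C_N d_i^f$ together with the identification $\sum_{i=1}^4 p_i^f (d_i^f)^{2/\alpha_N} = E\{(D^f)^{2/\alpha_N}\}$ then yields \eqref{LapffHetNetlowbound}.

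The only genuinely delicate points are establishing the monotonicity of the integrand in $G$ (to be certain the $G_{\text{max}}$ substitution tightens in the right direction) and checking the convergence condition $\alpha_N > 2$ that renders the improper integral finite; both are immediate once isolated. The remaining work — the substitution, the $\sin$-formula, and the reassembly of the product into an expectation — is routine algebra.
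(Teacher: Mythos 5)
Your proposal is correct and follows essentially the same route as the paper's proof: bound the exponent from above by replacing $G(x,\theta_{\text{tilt}})$ with $G_{\text{max}}$ in \eqref{LapfmHetNet} and by extending the integration range from $[\rho,\infty)$ to $[0,\infty)$ in \eqref{LapffHetNet}, then evaluate the resulting integrals in closed form and collapse the product over $i$ into $E\{D^{2/\alpha_N}\}$. The only difference is that you spell out the Mellin-type identity $\int_0^\infty \frac{x}{1+x^{\alpha_N}/a}\,dx = a^{2/\alpha_N}\frac{\pi}{\alpha_N\sin(2\pi/\alpha_N)}$ (with the condition $\alpha_N>2$) and the monotonicity check, both of which the paper leaves implicit.
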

\begin{proof}
To find a lower bound for \eqref{LapfmHetNet}, we replace $G\left(x,\theta \right)$ by its maximum value $G_{\text{max}}$ which results in \eqref{LapfmHetNetlowbound}. Furthermore, to obtain a lower bound for \eqref{LapffHetNet}, we use the fact that for any $f\left(x\right)\ge 0$ and $\rho \ge 0$, $\int_{\rho}^{\infty}{f\left( x \right) dx} \le \int_0^{\infty}{f\left( x \right) dx}$ which results in \eqref{LapffHetNetlowbound}. Also, we have $E\left\lbrace D^{\frac{2}{\alpha_N}} \right\rbrace = \sum_{i=1}^4{p_i d_i^{\frac{2}{\alpha_N}}}$. 
\end{proof}
\begin{corollary}
A lower bound on the coverage probability of the typical femto user in \eqref{FemUsermainCovP} in an interference limited regime (i.e. $\sigma^2 \approx 0$) is given by
\begin{equation}
{\mathcal{P}}_f^c \left(\gamma_f, \theta_{\text{tilt}},R_c \right) \ge C_0 e^{-\pi \lambda_m R_c^2},
\end{equation}
where $C_0$ is obtained as $C_0 = \frac{\alpha_N}{\alpha_L R_f^2\left(C_1+C_2 \right)^{\frac{\alpha_N}{\alpha_L}}}\times $ 
$ \gamma \left(\frac{\alpha_N}{\alpha_L}, \left(C_1 + C_2 \right) R_f^{\frac{2\alpha_L}{\alpha_N}} \right)$, $\gamma\left(. , . \right)$ denotes the lower incomplete gamma function, and $C_1$ and $C_2$ are defined as $C_1 = 2\pi \lambda_m \left( \frac{\gamma_f P_m \ell_w C_N}{P_f C_L D_0^f} \right)^{\frac{2}{\alpha_N}} \frac{\pi}{\alpha_N \sin\frac{2\pi}{\alpha_N}} E\left\lbrace {D^{mf}}^{\frac{2}{\alpha_N}} \right\rbrace$, $C_2 = 2\pi \lambda_{f'} \left( \frac{\gamma_f \left( \ell_w \right)^2 C_N}{C_L D_0^f} \right)^{\frac{2}{\alpha_N}} \frac{\pi}{\alpha_N \sin\frac{2\pi}{\alpha_N}} E\left\lbrace {D^f}^{\frac{2}{\alpha_N}} \right\rbrace$, respectively.
\end{corollary}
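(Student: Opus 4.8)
The plan is to start from the exact femto-user coverage expression \eqref{FemUsermainCovP}, invoke the interference-limited assumption $\sigma^2\approx 0$ so that the factor $e^{-s_f\sigma^2}\to 1$ drops out, and then replace the two Laplace-transform factors in the integrand by the lower bounds \eqref{LapfmHetNetlowbound} and \eqref{LapffHetNetlowbound} proved in the preceding lemma. Since both bounds are nonnegative and the weight $g_R(\rho)=2\rho/R_f^2$ is nonnegative on $[0,R_f]$, a pointwise lower bound on the integrand yields a valid lower bound on the entire integral, so the prefactor $e^{-\pi\lambda_m R_c^2}$ carries through unchanged.

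The key algebraic step is to make the $\rho$-dependence explicit. Substituting $s_f=\gamma_f\rho^{\alpha_L}/(P_f C_L D_0^f)$ into the two exponents, the term $s_f^{2/\alpha_N}$ produces a common power $\rho^{2\alpha_L/\alpha_N}$ that can be factored out of the remaining (now $\rho$-independent) constants. Collecting those constants reproduces exactly the definitions of $C_1$ and $C_2$ in the statement (with the maximal vertical gain $G_{\text{max}}$ normalized to unity, as it enters \eqref{LapfmHetNetlowbound}), so the product of the two bounds collapses to $\exp\!\left(-(C_1+C_2)\rho^{2\alpha_L/\alpha_N}\right)$. At this stage the coverage probability is bounded below by $e^{-\pi\lambda_m R_c^2}\int_0^{R_f}\frac{2\rho}{R_f^2}\,e^{-(C_1+C_2)\rho^{2\alpha_L/\alpha_N}}\,d\rho$.

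It then remains to evaluate this single integral in closed form. I would use the change of variables $t=(C_1+C_2)\rho^{2\alpha_L/\alpha_N}$, under which $\rho\,d\rho$ becomes proportional to $t^{\alpha_N/\alpha_L-1}\,dt$; tracking the Jacobian turns the integral into $\frac{\alpha_N}{2\alpha_L}(C_1+C_2)^{-\alpha_N/\alpha_L}\int_0^{(C_1+C_2)R_f^{2\alpha_L/\alpha_N}} t^{\alpha_N/\alpha_L-1}e^{-t}\,dt$, and the remaining integral is by definition the lower incomplete gamma function $\gamma\!\left(\alpha_N/\alpha_L,\,(C_1+C_2)R_f^{2\alpha_L/\alpha_N}\right)$. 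Combining this with the prefactor $2/R_f^2$ delivers precisely the constant $C_0$, giving ${\mathcal{P}}_f^c\ge C_0\,e^{-\pi\lambda_m R_c^2}$.

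The main obstacle here is bookkeeping rather than anything conceptual: one must handle the exponents in the change of variables carefully so that the order parameter of the incomplete gamma function emerges as $\alpha_N/\alpha_L$ and its upper limit as $(C_1+C_2)R_f^{2\alpha_L/\alpha_N}$, and one must check that the $G_{\text{max}}$ in \eqref{LapfmHetNetlowbound} is consistently absorbed so the assembled constant matches the stated $C_1$. Once the lemma's bounds are in hand, no further difficulty is expected.
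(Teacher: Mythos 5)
Your proposal is correct and follows the same route as the paper: the paper's proof is precisely "substitute \eqref{LapfmHetNetlowbound} and \eqref{LapffHetNetlowbound} into \eqref{FemUsermainCovP} and set $\sigma^2 \approx 0$," with the closed-form evaluation of the resulting integral left implicit. Your change of variables $t = (C_1+C_2)\rho^{2\alpha_L/\alpha_N}$ supplies exactly that omitted computation, and your observation that $G_{\text{max}}=1$ (the pattern's peak gain of $0$ dB) must be absorbed for the stated $C_1$ to match the lemma is a correct and worthwhile detail.
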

\begin{proof}
By substituting \eqref{LapfmHetNetlowbound}, \eqref{LapffHetNetlowbound} into \eqref{FemUsermainCovP} and considering an interference limited regime ($\sigma^2 \approx 0$), the proof is complete. 
\end{proof}

In this scenario, the EE of the network is written as

\begin{equation}\label{EEmacroFemto}
\text{EE}\left( \theta_{\text{tilt}} \right) = \frac{\sum_{i \in \left\{m,f\right\}}{\lambda_i {\mathcal{P}}_i^c\left(\gamma_i, \theta_{\text{tilt}}, R_c \right) \log_2\left(1+\gamma_i \right) }}{\sum_{i \in \left\{m,f\right\}}{\lambda_i\left(P_{ci}+\eta_i P_i \right)}}, 
\end{equation}
where $P_f$ and $P_{cf}$ respectively represent the transmitted power and the constant power consumption in the FBSs and $\eta_f$ is a constant related to the power amplifiers efficiency of the FBSs.

\section{Energy efficiency maximization}\label{EEsec}
As we see in \eqref{EEMBSHom} and \eqref{EEmacroFemto}, the EE is a function of $\theta_{\text{tilt}}$ and therefore, it can be maximized by optimizing the tilt angle. The optimum tilt angle of the BSs is obtained through the following optimization problem
\begin{align}\label{EEMBSmain}
& \underset{\theta_{\text{tilt}}} {\text{maximize}} 
\quad EE(\theta_{tilt}) 
\\ \nonumber
& \quad \text{s.t.} \quad \qquad 0 \le \theta_{\text{tilt}}\le 90^{\circ}. 
\end{align}
\noindent Unfortunately, the objective function of this problem is very complex and in the following, we propose low complexity algorithms for finding the optimal tilt angle in both scenarios.

\subsection{Homogeneous Cellular network}
In fact, to calculate the EE in this scenario, we first need to obtain $E_{I_{\Phi_m}}\left[ e^{-ms I_{\Phi_m}} \sum_{k=0}^{m-1}\frac{\left( ms \left(I_{\Phi_m}+\sigma^2 \right) \right)^k}{k!} \right]$ by \eqref{LaplaceofIntm} for each value of $\rho$. Then, the integral at the numerator \eqref{EEMBSHom2} must be calculated. Hence, the optimum tilt angle can not be found by an efficient method and we have to perform an exhaustive search over all possible values of $\theta_{\text{tilt}}$ which in this case is very hard to implement. 
To address this problem, in the following, we propose a low-complexity method for finding the optimum tilt angle. As it is seen in \eqref{PcMBSmaineq}, for calculating the coverage probability, we need to compute $E_{\rho} {\left\{ e^{-m s \sigma^2} \sum_{k=0}^{m-1}{\sum_{\ell = 0}^k{C_{k,\ell}} \left[\frac{d^{\ell}}{d z^{\ell}} {\mathcal{L}}_I\left(z,\theta_{\text{tilt}} \right) \right]_{z = m s}} \right\}}$. By considering the PDF of $R$ (i.e., the distance between the typical user and its serving BS) in \eqref{PDFNBBL}, we define two distance bounds of $\rho_0$ and $\rho_1$ such that $\text{Pr}\left\{ \rho_0 \le R \le \rho_1 \right\} \ge 1-\epsilon$. Using these bounds, the optimal tilt angle will be restricted to the following range 
\begin{equation} \label{tiltrange}
\max\left\lbrace \text{atan}\left(\frac{H_{\text{eff}}}{\rho_1} \right) - \theta_0 ,0 \right\rbrace \le \theta_{\text{tilt}}\le \text{atan}\left(\frac{H_{\text{eff}}}{\rho_0} \right) + \theta_0 ,
\end{equation}
where $\theta_0 = \theta_{\text{3dB}}\sqrt{{\text{SLL}_{\text{dB}}}/{12}}$. 
The values of $\rho_0$ and $\rho_1$ can be obtained numerically using \eqref{PDFNBBL} for a given $\epsilon$. In Fig. \ref{fig:PDFbound}, we depict the values of these two bounds for $\epsilon = 0.1$ in different densities of the BSs, $\lambda_m$. In addition, the average distance between the typical user and its serving BS, i.e., $\bar{\rho} = E\left\{ \rho \right\}$ is also shown in this figure. It is interesting to note that in large values of $\lambda_m$ (which is related to dense mmWave networks), both of $\rho_0$ and $\rho_1$ converge to $\bar{\rho}$. We exploit this property to simplify the calculations. 
\begin{figure}
\centering
\includegraphics[scale=0.52]{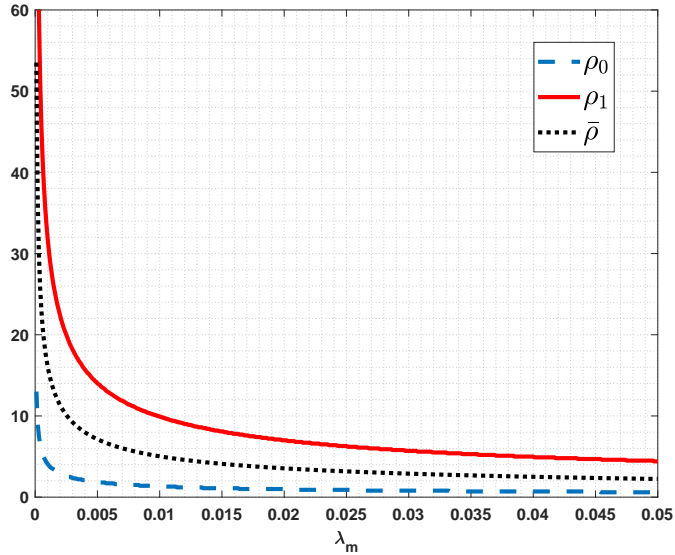}
\caption{~The values of $\rho_0$, $\rho_1$ and $\bar{\rho}$ versus  $\lambda_m$ (for $\beta = 3\times 10^{-3}$) } \label{fig:PDFbound} 
\end{figure}

Using the above property, we can apply the Taylor expansion at the point of $\bar{\rho}$ to obtain the following approximation for \eqref{PcMBSmaineq} as
\begin{align*}
& E_{\rho} {\left\{e^{-m s \sigma^2} \sum_{k=0}^{m-1}{\sum_{\ell = 0}^k{C_{k,\ell}} \left[\frac{d^{\ell}}{d z^{\ell}} {\mathcal{L}}_I\left(z,\theta_{\text{tilt}} \right) \right]_{z = m s}} \right\}} = \nonumber \\
& E_{\rho}\left\{\sum_{n=0}^{\infty}{\frac{\left(\rho - \bar{\rho}\right)^n}{n!}}\times \right. \nonumber \\
& \left. \frac{d^n}{d\rho^n}{\left(e^{-m s \sigma^2} \sum_{k=0}^{m-1}{\sum_{\ell = 0}^k{C_{k,\ell}} \left[\frac{d^{\ell}}{d z^{\ell}} {\mathcal{L}}_I\left(z,,\theta_{\text{tilt}} \right) \right]_{z = m s}} \right)\bigg{\vert}_{\bar{\rho}}}\right\} \approx \nonumber 
\end{align*}
\begin{align}
& e^{-m \bar{s} \sigma^2} \sum_{k=0}^{m-1}{\sum_{\ell = 0}^k{\frac{\left(m \bar{s} \sigma^2 \right)^{k-\ell}}{k!} {k \choose \ell}} \left(-1 \right)^{\ell} \left[\frac{d^{\ell}}{d z^{\ell}} {\mathcal{L}}'_I\left(z,\theta_{\text{tilt}} \right) \right]_{z = m \bar{s}}}, \label{LapappMBS}
\end{align}
\noindent where $\bar{s} = \frac{\gamma \, {\bar{\rho}}^{\, \alpha_L}}{C_L P_o D_0 G_0 }$ and 
\begin{align}\label{ProposApproach}
&{\mathcal{L}}'_{I_{\Phi_m}} \left( z ,\theta_{\text{tilt}} \right) = \nonumber \\
&\prod_{i=1}^4 {\exp \left( -2\pi \lambda_m p_i \int_{\bar{\rho}}^{\infty}{F_L\left(z,x,d_i,\theta_{\text{tilt}} \right) x P_L \left(x \right) dx } \right) } \times \nonumber \\
& \prod_{i=1}^4 {\exp \left( -2\pi \lambda_m p_i \int_{R^{-1}_{eq}(\bar{\rho})}^\infty{F_N\left(z,x,d_i ,\theta_{\text{tilt}}\right) xP_N \left(x \right) dx}\right)}.
\end{align}

In addition to the above approximation, another way to reduce the complexity of the optimization problem in \eqref{EEMBSmain}, is narrowing the search interval of $\theta_{\text{tilt}}$. From \eqref{tiltrange} and considering that in the dense mmWave networks, both 
$\rho_0$ and $\rho_1$ converge to $\bar{\rho}$, we can obtain the bounds of $\theta_{\text{tilt}}$ as $\theta_{\text{min}} \le \theta_{\text{tilt}} \le \theta_{\text{max}}$, where 
\begin{align} \label{Thetminmax}
& \theta_{\text{min}} = \max \left( 0 , \text{atan}\left(\frac{H_{\text{eff}}}{\bar{\rho}} \right) - \theta_0 \right), \nonumber \\ 
& \theta_{\text{max}} = \text{atan}\left(\frac{H_{\text{eff}}}{\bar{\rho}} \right) + \theta_0.
\end{align}
Therefore, an equivalent problem for \eqref{EEMBSmain} can be expressed as
\begin{align}\label{EEMBSmain2}
& \underset{\theta_{\text{tilt}}} {\text{maximize}} \nonumber \\
& \frac{e^{-m \bar{s} \sigma^2} \sum_{k=0}^{m-1}{\sum_{\ell = 0}^k{\frac{\left(m \bar{s} \sigma^2 \right)^{k-\ell}}{k!} {k \choose \ell}} \left(-1 \right)^{\ell} \left[\frac{d^{\ell}}{d z^{\ell}} {\mathcal{L}}'_I\left(z,\theta_{\text{tilt}} \right) \right]_{z = m \bar{s}}} }{P_{c} + \eta_m P_0},
\nonumber \\ 
& \quad \text{s.t.} \quad \qquad \theta_{\text{min}} \le \theta_{\text{tilt}} \le \theta_{\text{max}}. 
\end{align}

This problem has a significantly reduced computational complexity compared to the original problem in \eqref{EEMBSmain}. Since, we do not need to compute \eqref{LaplaceofIntm} for each value of $\rho$. In addition, the search interval is also limited. It can be shown that \eqref{EEMBSmain2} is a convex problem and hence, it can be solved efficiently. In Algorithm \ref{Alg1}, we present a bisection method to solve it. In Section \ref{simsec}, we will show that the performance of the proposed low-complexity approach is very close to the optimal solution found by exhaustive search.
\begin{algorithm} 
\caption{Bisection method}\label{Alg1} 
\begin{algorithmic}[1] 
\STATE Initialize $\theta_{\text{tilt}}^{\text{min}} = \theta_{\text{min}}$ and $\theta_{\text{tilt}}^{\text{max}} = \theta_{\text{max}}$.
\STATE Calculate ${\mathcal{L}}_{I_{\Phi_m}}'$ for $\theta_{\text{tilt}} = \frac{\theta_{\text{tilt}}^{\text{min}} +\theta_{\text{tilt}}^{\text{max}} }{2}$. 
\STATE If resulted ${\mathcal{L}}_{I_{\Phi_m}}$ is greater than the result for $\theta_{\text{tilt}}^{\text{min}}$, then set $\theta_{\text{tilt}}^{\text{min}} =\theta_{\text{tilt}}$. Otherwise set $\theta_{\text{tilt}}^{\text{max}} =\theta_{\text{tilt}}$
\STATE Stop when $\left\lvert \theta_{\text{tilt}}^{\text{min}} - \theta_{\text{tilt}}^{\text{max}} \right\rvert$ is less than a predefined value.
\end{algorithmic} 
\end{algorithm} 
\subsection{Two Tier Heterogeneous Network}
As mentioned in Section \ref{seccov}, in the HetNet scenario, to improve the coverage of the macro users, a sleep region with radius $R_c$ is introduced around each MBS. On the other hand, when we turn off some FBSs, the coverage of the typical femto user decreases. As a result, we have a tradeoff between the coverage probabilities of the macro and femto users. Therefore, in our optimization problem, the radius of the sleep region should be considered as an optimization parameter in addition to the tilt angle, and the EE maximization problem becomes 
\begin{align}\label{mainHetNetEEOpt}
& \underset{\theta_{\text{tilt}},R_c}{\max} \frac{\sum_{i\in \left\{m,f \right\}}{\lambda_i {\mathcal{P}}_i^c \left(\gamma_i , \theta_{\text{tilt}}, R_c\right) \log_2 \left(1+\gamma_i \right)}}{\sum_{i\in \left\{m,f \right\}}{\lambda_i \left( P_{ci}+\eta_i P_i\right)}}, \nonumber \\
& \text{s.t.} \quad {\mathcal{P}}_m^c \ge 1-\epsilon_m, \qquad {\mathcal{P}}_f^c \ge 1-\epsilon_f, \nonumber \\
& 0 \le \theta_{\text{tilt}}\le 90^{\circ}, 0\le R_c \le R_c^{\text{max}},
\end{align}
where $\epsilon_m$ and $\epsilon_f$ are the minimum coverage requirements in the typical macro and femto users, respectively, and $R_c^{\text{max}}$ denotes the maximum radius of the sleep region which is equal to $R_c^{\text{max}} =\frac{1}{\sqrt{\pi \lambda_m}}$. Again, this optimization problem is too complex to solve numerically. 

To reduce the complexity of the above optimization problem, we follow a similar approach as in the homogeneous scenario. To this end, we consider the following optimization problem
\begin{align}
& \underset{\theta_{\text{tilt}},R_c}{\max}\frac{\lambda_m {\mathcal{P}}^{c'}_m \log_2\left(1+\gamma_m \right) +\lambda_{f'} {\mathcal{L}}'_{I^f_{\Phi_m}} {\mathcal{L}}'_{I^f_{\Phi'_f}}\log_2\left(1+\gamma_f \right)}{\lambda_m \left( P_{cm}+\eta_m P_m\right)+\lambda_{f} \left( P_{cf}+\eta_f P_f\right)}, \nonumber \\
& \text{s.t.} \quad {\mathcal{L}}'_{I_{\Phi_m}}{\mathcal{L}}'_{I^m_{\Phi'_f}} \ge 1-\epsilon_m,\: e^{-\pi \lambda_m R_c^2}{\mathcal{L}}'_{I^f_{\Phi_m}} {\mathcal{L}}'_{I^f_{\Phi'_f}} \ge 1-\epsilon_f, \nonumber \\
& 0\le \theta_{\text{tilt}}\le 90^{\circ}, 0\le R_c \le R_c^{\text{max}},
\end{align}
where 
\begin{align*}
{\mathcal{P}}^{c'}_m = \sum_{k=0}^{m-1}\sum_{\ell = 0}^k & \frac{\left(m \bar{s} \sigma^2 \right)^{k-\ell}}{k!}\binom{k}{\ell} \left(-1 \right)^{\ell} \\
&\times \left[\frac{d^{\ell}}{d z^{\ell}} \left( {\mathcal{L}}'_{I_{\Phi_m}}\left(z,\theta_{\text{tilt}} \right) {\mathcal{L}}'_{I^m_{\Phi'_f}}\left(z,\theta_{\text{tilt}} \right) \right) \right]_{z = m\bar{s}}.
\end{align*}
Here we define
\begin{align}
{\mathcal{L}}'_{I^m_{\Phi'_f}} & = \prod_{i=1}^4{\exp\left( -2\pi \lambda_{f'} p_i^{fm}\left( \bar{s} P_f \ell_w C_N d_i^{fm}\right)^{\frac{2}{\alpha_N}} \frac{\pi}{\alpha_N \sin\frac{2\pi}{\alpha_N}} \right)}, \label{LapprimmtoF}\\
{\mathcal{L}}'_{I^f_{\Phi_m}} & =\prod_{i=1}^4{\exp\left( -2\pi \lambda_m p_i^{mf} \int_0^{\infty}{\frac{x dx}{1+\frac{x^{\alpha_N}}{\bar{s}_f P_m \ell_w C_N d_i^{mf} G\left(x,\theta_{\text{tilt}} \right)}}} \right)}, \label{LapprimftoM} \\
{\mathcal{L}}'_{I^f_{\Phi'_f}} & = \prod_{i=1}^4{\exp\left( -2\pi \lambda_{f'} p_i^f \int_{\bar{g}}^{\infty}{\frac{x dx}{1+\frac{x^{\alpha_N}}{\bar{s}_f P_f \left( \ell_w \right)^2 C_N d_i^f }}} \right)}, \label{LapprimftoF}
\end{align}
where $\bar{g} = E\left\lbrace g_R\left( \rho \right) \right\rbrace$ and $\bar{s}_f = \frac{\gamma_f}{P_f C_L D_0} \bar{g}^{\alpha_L}$. By using this approach, the calculations are significantly simplified compared to the optimum exhaustive search. Here, instead of calculating \eqref{MacromainCov} and \eqref{FemUsermainCovP} in which we need to compute \eqref{LaplaceofIntm}, \eqref{LapIntFBStoMacuser}, \eqref{LapfmHetNet} and \eqref{LapffHetNet} for all values of $\rho$, it is sufficient to evaluate \eqref{ProposApproach} and \eqref{LapprimmtoF} only for $\rho = \bar{\rho}$ and obtain \eqref{LapprimftoM} and \eqref{LapprimftoF} only for $\rho = \bar{g}$. In the next section, we will show that by applying this low-complexity approach, only a minor degradation in the performance is observed. 

\section{NUMERICAL RESULT}\label{simsec}
In this section, we numerically evaluate the performance of the proposed 3DBF with tilt angle optimization scheme for the mmWave networks. Through the simulations, we demonstrate that how the EE of the network is improved when the tilt angle of the BSs' antenna is optimized. In addition, the performance of the proposed low-complexity method is compared with the optimal solution obtained by exhaustive search. Table~\ref{Partable} summarizes the simulation parameters used in this section \cite{blockanalysis14,Largantensys15}. 

\begin{table}[!h]
\centering
\caption{~Simulation Parameters } \label{Partable}
\resizebox{\columnwidth}{!}{
\begin{tabular}{ |c|c||c|c|}
\hline
\textbf{Parameter} & \textbf{Value} & \textbf{Parameter} & \textbf{Value} \\ 
\hline
SLL$_{dB}$ & $20$ dB & $\theta_{3dB}$ & $6^{\circ}$ \\
\hline
$\alpha_N$ & $4$ & $\alpha_L$ & $2.5$ \\
\hline
$P_f$, $P_{fc} $ & $100$ mWatt, $9.6$ Watt & $\eta_f$ & $4$ \\
\hline
$P_m$, $P_{mc} $ & $20$, $68.73$ Watt & $\eta_m$ & $3.77$ \\
\hline
$\left( M_r,m_r,\theta_r \right)$ & $\left( 10 \text{ dB}, -10 \text{ dB}, 90^{\circ} \right)$ & $\left( M_t,m_t,\theta_t \right)$ 
& $\left( 10 \text{ dB}, -10\text{ dB}, 30^{\circ} \right)$ \\
\hline
$\left( M_r^f,m_r^f,\theta_r^f \right)$ & $\left( 10 \text{ dB}, -10 \text{ dB}, 90^{\circ} \right)$ & $\left( M_t^f,m_t^f,\theta_t^f \right)$ & $\left( 10 \text{ dB}, -10\text{ dB}, 30^{\circ} \right)$ \\
\hline
$\beta_1, \beta_2$ & $0.003, 0.006$ & $R_f$ & $ 30 \text{ m}$ \\
\hline
\end{tabular}
}
\end{table}

We first examine the homogeneous scenario. The coverage probability of the network under this scenario is depicted in Fig.~\ref{fig:ProposAppCov} as a function of the SINR threshold. The density of the MBS is $\lambda_m = 4.973 \times 10^{-5}$ and the curves are obtained under two different values of the blockage effect intensity $\beta$ as in table~\ref{Partable} and $m=5$. It is observed that by increasing the 3DBF outperforms in comparison with the network in which the tilt angle is not optimized (marked as 2DBF in the figure) and also the proposed low complexity method have performance close the optimal solution resulted from the exhaustive search.
\begin{figure}[!t]

\centering 
\includegraphics[scale=0.54]{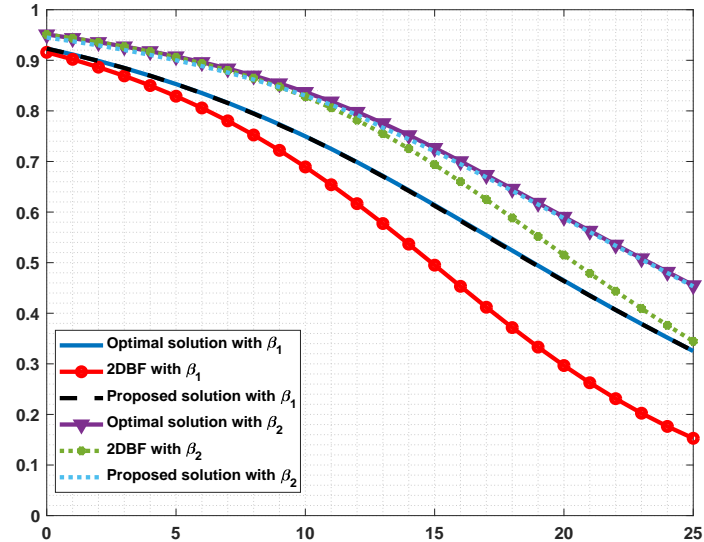}
\caption{~ Comparison of the coverage probability of the proposed low-complexity approach and the optimal solution and with the 2DBF.} \label{fig:ProposAppCov}
\end{figure}

Fig.~\ref{fig:ProposApp} illustrates the EE of the network in terms of the SINR threshold for $\lambda_m = 8\times 10^{-4}$ and two values of $\beta$ as in Table~\ref{Partable} under this scenario with $m=1$. As we see in this figure, the EE of the network that adopts 3DBF is always improved in comparison with the 2DBF. This improvement is more than 100\% in high SINR thresholds. In addition, in this figure, the EE performance of the proposed low-complexity method is compared with the optimum method based on the exhaustive search. As we see in the figure, the performance of the proposed low-complexity approach is the same as the optimal solution in almost all the SINR threshold and for both values of $\beta$. 
\begin{figure}[!h]
\centering 
\includegraphics[scale=0.54]{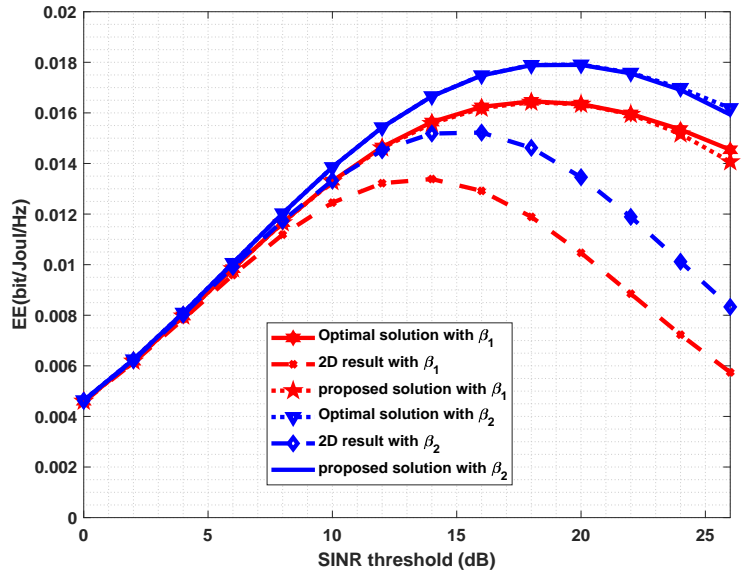}
\caption{~EE comparison of the proposed low-complexity approach and the optimal solution and with the 2DBF.} \label{fig:ProposApp}
\end{figure}

Fig.~\ref{fig:EElambda2_1} presents the network EE with respect to the tilt angle for $\lambda_m = 5.093\times 10^{-6}$ and $m=5$. In this figure, the optimum tilt angles obtained by exhaustive search and the proposed low-complexity method are shown. Also, the dashed lines represent the tilt angle bounds obtained in \eqref{Thetminmax}. We see that both tilt angles are almost the same.

\begin{figure}[t]
\centering
\includegraphics[scale = 0.54]{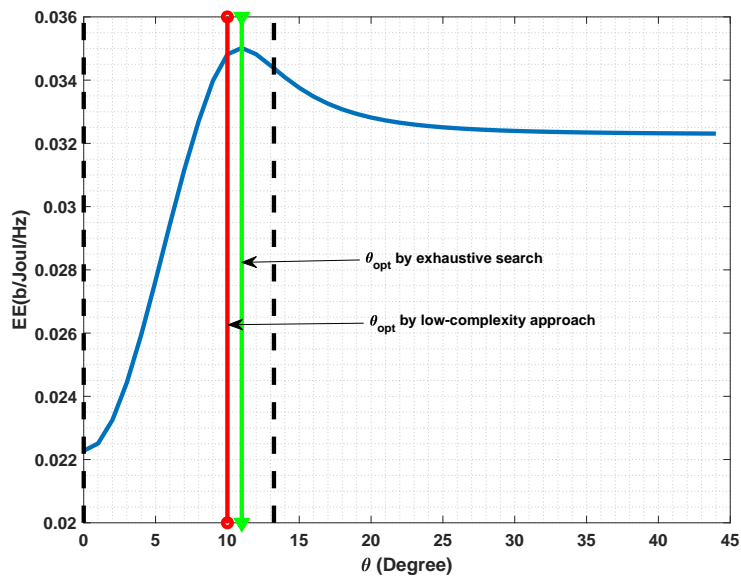}
\caption{~EE comparison with respect to the BS tilt angle $\beta=\beta_1$, $\gamma = 20 \text{ dB}$, $\lambda_m = 5.093 \times 10^{-6}$.} \label{fig:EElambda2_1} 
\end{figure}

In Fig.~\ref{fig:MacroCovbei1Lamb1}, performance of the HetNet scenario is evaluated. This figure exhibits the effect of the FBSs density $\lambda_f$ and the radius of the sleep region $R_c$ on the optimum tilt angle that maximizes the coverage of the typical macro user. In this figure we see that by increasing the density of the FBSs, the optimum tilt angle slightly decreases. Also by increasing $R_c$ or reducing the density of the FBSs, the coverage probability of the macro users increases, since interference from the FBSs is reduced.

\begin{figure}[t]
\centering 
\includegraphics[scale=0.54]{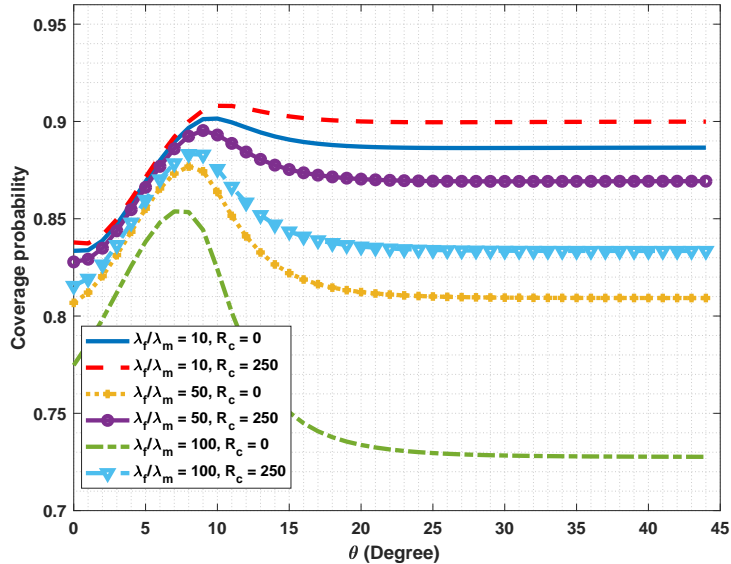}
\caption{~The coverage probability of the typical macro user with various $\lambda_f$ and $R_c$ ($\lambda_m = 5.093 \times 10^{-6}, \beta = \beta_1$).} \label{fig:MacroCovbei1Lamb1}
\end{figure}

Fig.~\ref{fig:Lowersig5e-4} illustrates the coverage probability of the typical femto user in terms of $R_c$ for case of $\sigma^2 = 5 \times 10^{-4}$, which corresponds to $\text{SNR}_f= \frac{P_f}{\sigma^2} = 23\text{ dB}$. In this figure, we see that the lower bound obtained in (30) is very tight. It is observed that although the lower bound is obtained under assumption of an interference limited scenario, it is still quite tight in other scenarios.
\begin{figure}[t]
\centering 
\includegraphics[scale=0.54]{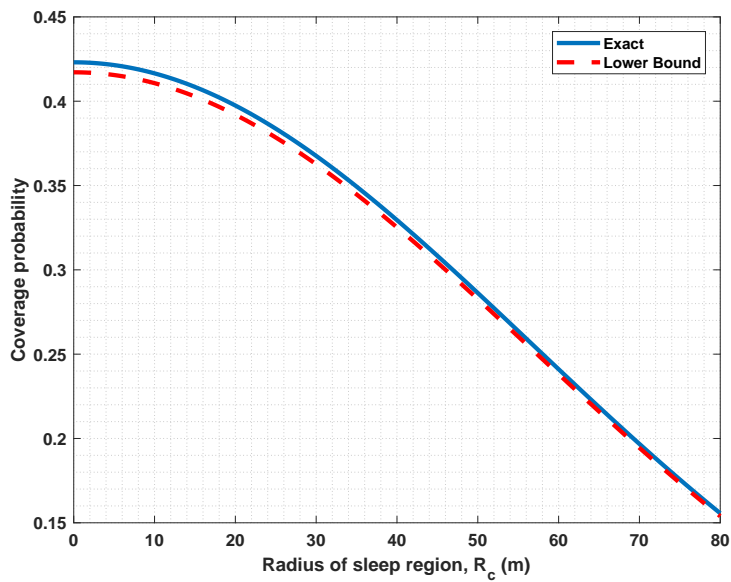}
\caption{~Comparison of the exact coverage probability of the typical femto user and its lower bound with $\lambda_f = 10 \lambda_m (\lambda_m = 4.973 \times 10^{-5})$, $\sigma^2 = 5 \times 10^{-4} (\text{SNR}_f = 23\text{ dB})$.} \label{fig:Lowersig5e-4}
\end{figure}
Figures \ref{fig:OptEEi3bei1bei2} and \ref{fig:Rcopti3bei1bei2} plot the optimal EE, the optimal radius of the sleep region and the optimal tilt angle, respectively, with $\epsilon_m = 0.2, \epsilon = 0.7$, $\gamma_m = \gamma_f = 10\text{ dB}$. We can check that the proposed low complexity approach has only a minor degradation in the performance with respect to the exhaustive search.
\begin{figure}[!h]
\centering 
\includegraphics[scale=0.54]{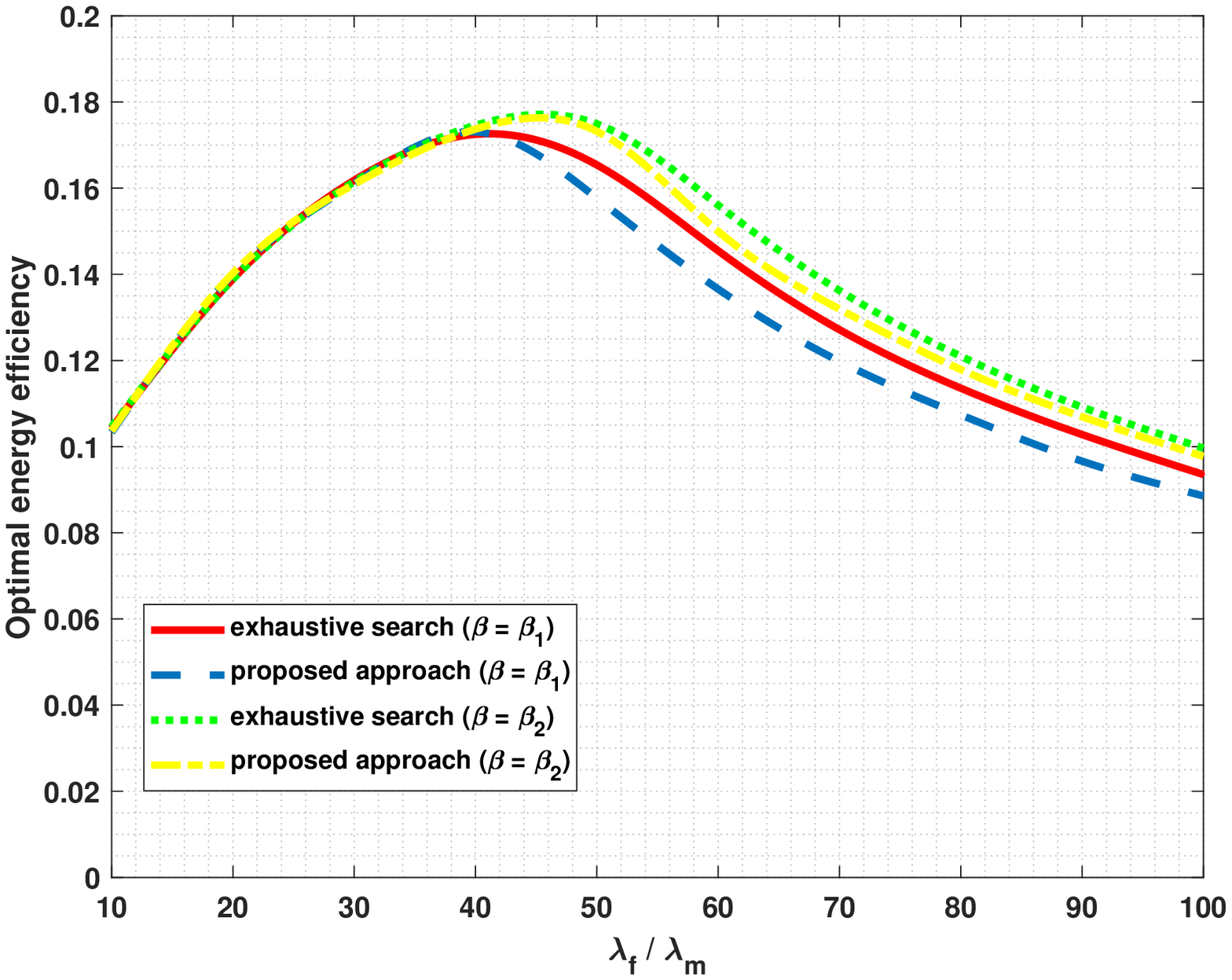}
\caption{~Comparison of the optimal EE for different values of the blockage parameter ($\lambda_m =  4.973 \times 10^{-5}$) .} \label{fig:OptEEi3bei1bei2}
\end{figure}

\begin{figure}[!h]
\centering 
\includegraphics[scale=0.54]{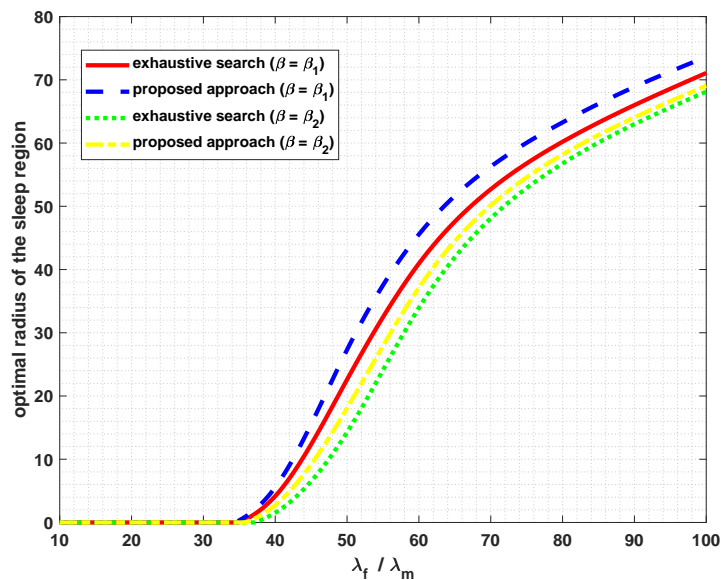}
\caption{~Comparison of the optimal $R_c$ for different values of the blockage parameter ($\lambda_m =  4.973 \times 10^{-5}$) .} \label{fig:Rcopti3bei1bei2}
\end{figure}

\section{CONCLUSION}\label{consec}
In this paper, we have studied the EE maximization problem in the downlink of a 3D beamforming mmWave network. We have optimized the tilt angle of the BSs to maximize the EE in a homogeneous network and a two tier HetNet mmWave cellular network. In both scenarios, we have optimized the tilt angle of the MBS’s antenna to maximize the EE. In addition, in the second scenario, the optimization of the radius of the sleep region has also been considered. In addition, to reduce the complexity of the optimization problems, an efficient method based on bisection algorithm has been proposed to compute the optimal tilt angle. The proposed algorithms result in almost the same EE performance as the optimal method based on exhaustive search but with much reduced complexity.

\bibliographystyle{IEEE}

\end{document}